\documentclass{article}
\usepackage{graphicx} 
\usepackage{amsfonts,amsmath,amsthm}
\usepackage{hyperref}
\usepackage{comment}

\newtheorem{theorem}{Theorem}[section]

\newtheorem{proposition}[theorem]{Proposition}

\usepackage{xcolor}
\usepackage{multirow}
\usepackage{subcaption}

\title{On-Line Policy Iteration with Trajectory-Driven Policy Generation\footnote{
Y. Li and D. Bertsekas are with the Fulton School of Engineering, Arizona State University, Tempe, AZ. F. Chen, Y. Li, and C. Fan are with the Department of Aeronautics and Astronautics and Laboratory for Information and Decision Systems, MIT, Cambridge, MA.}}
\author{Yuchao Li, Fei Chen, Yingke Li, Chuchu Fan, and Dimitri Bertsekas}
\date{May 2026}

\begin{document}
\maketitle
\begin{abstract}
We consider deterministic finite-horizon optimal control problems with a fixed initial state. We introduce an on-line policy iteration method, which, starting from a given policy, however obtained, generates a sequence of cost-improving policies and corresponding trajectories. Each policy produces a trajectory, which is used in turn to generate data for training the  next policy. The method is motivated by problems that are repeatedly solved starting from the same initial state, including discrete optimization and path planning for repetitive tasks. For such problems, the method is fast enough to be used on-line. Under a natural consistency condition, we show that the sequence of costs of the generated policies is monotonically improving for the given initial state (but not necessarily for other states). We illustrate our results with computational studies from combinatorial optimization and  3-dimensional path planning for drones {and a robot arm} in the presence of obstacles. We also discuss briefly a stochastic counterpart of our algorithm. Our proposed framework combines elements of rollout and policy iteration with flexible trajectory-based policy representations, and applies to problems involving a single as well as multiple decision makers. It also provides a principled way to train neural network-based policies using trajectory data, while preserving monotonic cost improvement.
\end{abstract}

\section{Introduction}\label{sec:intro}
A general deterministic finite-horizon optimal control problem involves the discrete-time dynamic system
\begin{equation}
\label{eq:finite_dynamics}
    x_{k+1}=f_k(x_k,u_k),\qquad k=0,1,\dots,N-1,
\end{equation}
where the state $x_k$ at the $k$th stage is an element of some state space, and the control $u_k$ is an element of some control space. The control $u_k$ is constrained to take values in a given subset $U_k(x_k)$, which depends on the current state $x_k$. The cost of the $k$th stage is denoted by $g_k(x_k,u_k)$. Upon reaching state $x_N$ at the $N$th stage, there is a terminal cost $g_N(x_N)$.

We consider policies of the form 
$$\pi=\{\mu_0,\mu_1,\dots,\mu_{N-1}\},$$
where $\mu_k$ maps states $x_k$ into controls $u_k=\mu_k(x_k)$, and satisfies the control constraints $\mu_k(x_k)\in U_k(x_k)$ for all $x_k$. Given an initial state $x_0$ and a policy $\pi=\{\mu_0,\dots,\mu_{N-1}\}$, the cost of $\pi$ starting from $x_0$ is given by
$$J_\pi(x_0)=g_N(x_N)+\sum_{k=0}^{N-1}g_k\big(x_k,\mu_k(x_k)\big),$$
where $x_{k+1}=f_k\big(x_k,\mu_k(x_k)\big)$, $k=1,\dots,N-1$; cf. Eq.~\eqref{eq:finite_dynamics}. The optimal cost starting from $x_0$ is denoted by $J^*(x_0)$:
$$J^*(x_0)=\min_{\pi\in \Pi}J_\pi(x_0),$$
where $\Pi$ is the set of all policies. An optimal policy $\pi^*$ is one that attains the optimal cost for every $x_0$; i.e., 
$$J_{\pi^*}(x_0)=\min_{\pi\in \Pi}J_\pi(x_0),\qquad \hbox{for all }x_0.$$

In principle, the optimal cost function $J^*$ and optimal policy $\pi^*$ can be computed via dynamic programming (DP): starting with $J^*_N(x_N)=g_N(x_N)$ for all $x_N$, and for $k=N-1,\dots,0$, we set
\begin{equation}
    \label{eq:bellman}
    J^*_k(x_k)=\min_{u_k\in U_k(x_k)}\Big[g_k(x_k,u_k)+J^*_{k+1}\big(f_k(x_k,u_k)\big)\Big],\qquad \hbox{for all }x_k.
\end{equation}
The optimal cost function $J^*$ is equal to $J^*_0$, the final function generated by DP. Moreover, an optimal policy $\pi^*=\{\mu^*_0,\dots,\mu^*_{N-1}\}$ can be obtained via\footnote{Throughout the paper, wherever minima appear in various relations, we assume that they are attained. Therefore, we use the notation $\min$ in place of $\inf$.} 
\begin{equation}
    \label{eq:optimal_mu}
    \mu^*_k(x_k)\in \arg\min_{u_k\in U_k(x_k)}\Big[g_k(x_k,u_k)+J^*_{k+1}\big(f_k(x_k,u_k)\big)\Big],
\end{equation}
for all $x_k$, $k=0,1,\dots,N-1$. However, the DP algorithm is computationally prohibitive for most practical problems.

An alternative to the DP algorithm is policy iteration (PI). Starting with a policy $\pi^0=\{\mu_0^0,\mu_1^0,\dots,\mu_{N-1}^0\}$, the PI algorithm generates a sequence of policies $\pi^0,\pi^1,\dots$, which are \emph{improving} in the sense that
$$J_{\pi^{k+1}}(x_0)\leq J_{\pi^{k}}(x_0),\qquad \text{for all $x_0$ and $k$},$$
while it can be shown that $J_{\pi^N}=J^*$ (i.e., the policy obtained after $N$ policy iterations is optimal). However, the PI algorithm requires more computation than the DP algorithm, so it makes little sense to use it for solving the problem exactly. Still, the PI algorithm has served as the starting point for  approximate solution approaches that have been popular in reinforcement learning (RL). 

There have been recent proposals of approximate PI schemes that compute a sequence of policies \emph{on-line, during the system's operation}. For example,  methods of this type have been proposed in \cite{rosolia2017learning}, \cite{li2021data}, \cite{bertsekas2021line}, \cite{rosolia2022optimality}, and \cite{bao2024value}. The purpose of the present paper is to introduce a  form of on-line PI, which is related to  these works, but contains some substantially new ideas relating to cost improvement guarantees. In particular, our algorithm computes an improved policy for states that are encountered during operation of the system, and applies to problems with a fixed initial state. The requirement of a fixed initial state may appear restrictive at first sight, but it arises in many important contexts, such as discrete optimization and control for repetitive tasks. Our algorithm assumes that a policy can be obtained by using \emph{a single complete trajectory $\{x_0,u_0,x_1,\dots,u_{N-1},x_N\}$}. In particular, we will propose a method that uses sampling around this trajectory, and then trains a policy network to learn a new and improved policy and associated trajectory. We will demonstrate our method with challenging applications involving discrete optimization and drone robotics, including path planning problems that involve multiple drones, which have not been addressed in the literature.

The remainder of the paper is organized as follows. In Section~\ref{sec:background}, we provide a brief review of exact and approximate PI algorithms, and place our contributions in context. In Section~\ref{sec:online_pi}, we introduce our on-line PI algorithm and its variants, and analyze their performance guarantees. In Section~\ref{sec:compute}, we present computational results with our algorithm applied to a discrete optimization problem and path-planning problems involving a single as well as multiple decision makers.

\section{Exact and Approximate Policy Iteration}\label{sec:background}
In this section, we review exact and approximate PI algorithms that are related to our proposed schemes. In addition, we describe the theoretical properties of the exact PI algorithm for finite-horizon problems, laying the foundation for our on-line approximate PI algorithm and its variants. 

\subsection{Related Work}
The PI algorithm dates back to Bellman and his classical book \cite{bellman1957dynamic}, where it was referred to as `approximation in policy space.' When applied to problems with a finite horizon, the PI algorithm requires much more computation than the exact DP algorithm. Therefore, it makes little sense to use PI when the DP algorithm is tractable. Still, PI has been the starting point for designing various approximate solution methods. Broadly speaking, we divide these algorithms into two categories: \emph{off-line algorithms}, which are executed before the start of the system's operation (and hence are not subject to computation time restrictions), and \emph{on-line algorithms}, which are fast enough to be used while the system is in operation.

The off-line approximate PI algorithms typically approximate the PI algorithm by using some form of cost function approximation. The purpose of the approximation is to deal with a large state space. Instead of computing policies, these algorithms often collect data for some sample states by using approximate forms of PI and train the function approximators afterwards. Successful applications of these methods include early works such as the TD-Gammon program by Tesauro \cite{tesauro1995temporal}, and later by Tesauro and Galperin \cite{tesauro1996line}, the application to Tetris by Bertsekas and Ioffe \cite{bertsekas1996temporal}, and later by Scherrer et al. \cite{scherrer2015approximate}, as well as the more recent AlphaGo and AlphaZero programs \cite{silver2016mastering,silver2017mastering}.

The off-line algorithms involve multiple iterations of approximate PI. By contrast,  on-line algorithms start with a base policy and execute only one iteration of PI, either exactly or approximately, and only for those states encountered during system operation. In fact, after completion of the off-line computation, both TD-Gammon \cite{tesauro1996line} and AlphaGo \cite{silver2016mastering} perform on-line a policy improvement computation to obtain a much better policy. The on-line exact and approximate execution of an iteration of PI is often referred to as \emph{rollout}, and has been analyzed and applied in a variety of contexts; see \cite{bertsekas1997rollout,bertsekas1999rollout,secomandi2001rollout,bertsekas2021multiagent,bai2023rollout,hammar2025adaptive}. Moreover, some important forms of model predictive control \cite{sznaier1987suboptimal,keerthi1988optimal,chen1998quasi}, a prominent control system design methodology, can also be viewed as conducting a single iteration of PI; see \cite{bertsekas2005dynamic,bertsekas2022lessons}. Still, one iteration of approximate PI may not provide sufficient improvement, depending on the application at hand.

To enable on-line multiple policy improvements, there have been proposals of PI algorithms that can be applied repeatedly during the system operation. Examples include \cite{rosolia2017learning} and \cite{bertsekas2021line}. Focusing on problems with a fixed initial state, Rosolia and Borrelli \cite{rosolia2017learning} proposed a form of approximate PI that uses the trajectory of the previous iteration and improves upon it. For general stochastic optimal control problems with an infinite horizon, the last author introduced an on-line PI algorithm \cite{bertsekas2021line} for discounted infinite horizon Markovian decision problems, which executes policy improvement only for states encountered during real-time operation. The algorithm  was shown to be convergent to the optimal policy under reasonable assumptions. The present work is inspired by this research but differs in three important ways:
\begin{itemize}
    \item[1)] By introducing a mechanism that generates policies based on complete trajectory data, our method allows more flexible representations of policies; 
    \item[2)] By providing a natural and checkable condition for safeguarding the performance guarantee, our method helps to guide the design and training of the neural-network based policy;
    \item[3)] By providing a computational procedure for dealing with continuous space controls, our method applies to highly nonlinear problems and scales well with control dimension. 
\end{itemize}

{Similar to the scheme proposed in \cite{bertsekas2021line}, the on-line PI method can be used even when the state equation $f_k$ and/or the stage and terminal cost functions $g_k$ are available only as `black boxes.' In particular, our method does not require analytical expressions for $f_k$ and $g_k$, provided that state trajectories and the corresponding stage and terminal costs can be simulated efficiently in real time. We illustrate this point in Section~\ref{sec:4_4}.}

\subsection{Policy Iteration and Cost Improvement}
We briefly review the PI algorithm and its cost improvement property. For a textbook treatment of the subject, see \cite{bertsekas2020rollout,bertsekas2025course}. As discussed in Section~\ref{sec:intro}, the PI algorithm starts with a policy $\pi^0$, and generates a sequence of policies $\pi^0,\pi^1,\dots$. At a typical iteration, given policy $\pi^\ell=\{\mu_0^{\ell},\dots,\mu_{N-1}^{\ell}\}$, we first perform the \emph{policy evaluation} step, where we set 
\begin{equation}
    \label{eq:pe_n}
    J_{N,\pi^\ell}(x_N)=g_N(x_N),\qquad \hbox{for all }x_N,
\end{equation}
and
\begin{equation}
    \label{eq:pe_k}
    J_{k,\pi^\ell}(x_k)=g_N(x_N)+\sum_{i=k}^{N-1}g_i\big(x_i,\mu_i^\ell(x_i)\big),\;\hbox{for all }x_k,\;k=0,\dots,N-1,
\end{equation}
with $x_{i+1}=f_i\big(x_i,\mu_i^\ell(x_i)\big)$, $i=k,\dots,N-1$. We then carry out the \emph{policy improvement} step, where we set $\pi^{\ell+1}=\{\mu_0^{\ell+1},\dots,\mu_{N-1}^{\ell+1}\}$ with
\begin{equation}
    \label{eq:pi}
    \mu_{k}^{\ell+1}(x_k)\in \arg\min_{u_k\in U_k(x_k)}\Big[g_k(x_k,u_k)+J_{k+1,\pi^\ell}\big(f_k(x_k,u_k)\big)\Big],\qquad \hbox{for all }x_k.
\end{equation}

The following classical result holds for the PI algorithm.
\begin{proposition}[Cost Improvement and Convergence of PI]\label{prop:pi_property}
    Consider the sequence of policies $\pi^0,\pi^1,\dots$, generated by the PI algorithm \eqref{eq:pe_n}-\eqref{eq:pi}. The policies are improving in the sense that 
    \begin{equation}
        \label{eq:pi_imp}
        J_{k,\pi^{\ell+1}}(x_k)\leq J_{k,\pi^{\ell}}(x_k), \qquad \hbox{for all $x_k$, $k$, and $\ell$}.
    \end{equation}
    Moreover, PI converges in $N$ iterations, i.e., 
    \begin{equation}
        \label{eq:pi_converge}
        J_{\pi^N}=J^*,\qquad \pi^N=\pi^*.
    \end{equation}
\end{proposition}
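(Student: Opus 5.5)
Both claims follow from backward induction on the stage index $k$, using the recursive form of policy evaluation. From \eqref{eq:pe_k}, for any policy $\pi$ we have $J_{N,\pi}=g_N$ and
\[
J_{k,\pi}(x_k)=g_k\big(x_k,\mu_k(x_k)\big)+J_{k+1,\pi}\big(f_k(x_k,\mu_k(x_k))\big),\qquad k=0,\dots,N-1 .
\]
This recursion drives both parts of the argument.

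\emph{Cost improvement \eqref{eq:pi_imp}.} Fix $\ell$. We prove $J_{k,\pi^{\ell+1}}\le J_{k,\pi^\ell}$ for all $x_k$ by induction going backward from $k=N$. At $k=N$ both functions equal $g_N$. Assume the inequality holds at $k+1$ and write $\bar u=\mu_k^{\ell+1}(x_k)$. Then
\[
J_{k,\pi^{\ell+1}}(x_k)=g_k(x_k,\bar u)+J_{k+1,\pi^{\ell+1}}\big(f_k(x_k,\bar u)\big)\le g_k(x_k,\bar u)+J_{k+1,\pi^{\ell}}\big(f_k(x_k,\bar u)\big).
\]
By \eqref{eq:pi}, $\bar u$ minimizes $g_k(x_k,u)+J_{k+1,\pi^\ell}(f_k(x_k,u))$ over $u\in U_k(x_k)$. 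So the right-hand side is at most the value of that expression at $u=\mu_k^\ell(x_k)$, which equals $J_{k,\pi^\ell}(x_k)$.

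\emph{Convergence in $N$ iterations \eqref{eq:pi_converge}.} The claim to prove by induction on $m=1,\dots,N$ is that $J_{k,\pi^\ell}=J^*_k$ for all $k\ge N-m$ and all $\ell\ge m$. Here $J^*_k$ are the DP functions from \eqref{eq:bellman}.

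For $m=1$, take $\ell\ge 1$. At stage $N-1$ the policy $\pi^{\ell}$ minimizes $g_{N-1}(x,u)+J_{N,\pi^{\ell-1}}(f_{N-1}(x,u))$. Since $J_{N,\pi^{\ell-1}}=g_N=J^*_N$, this is exactly the Bellman minimization. Hence $J_{N-1,\pi^\ell}=J^*_{N-1}$, and also $J_{N,\pi^\ell}=g_N=J^*_N$.

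For the inductive step, suppose the claim holds for $m$ and take $\ell\ge m+1$, so that $\ell-1\ge m$ and therefore $J_{N-m,\pi^{\ell-1}}=J^*_{N-m}$. Then $\mu^{\ell}_{N-m-1}$ minimizes $g_{N-m-1}(x,u)+J^*_{N-m}(f_{N-m-1}(x,u))$, so it is a Bellman-optimal selection as in \eqref{eq:optimal_mu}. Applying the evaluation recursion together with $J_{N-m,\pi^\ell}=J^*_{N-m}$ gives $J_{N-m-1,\pi^\ell}=J^*_{N-m-1}$.

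Taking $m=N$ and $\ell=N$ gives $J_{0,\pi^N}=J^*_0=J^*$, and $J_{k,\pi^N}=J^*_k$ at every stage. Moreover, each $\mu^N_k$ is a minimizer in \eqref{eq:optimal_mu}, so $\pi^N$ is optimal.

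The main obstacle is getting the indexing of the convergence induction right. Iteration $\ell$ only fixes stages $N-\ell$ onward, so the induction hypothesis must be stated for all later iterates $\ell\ge m$, not just for $\ell=m$. One must also make explicit the standing convention that minima are attained, and interpret ``$\pi^N=\pi^*$'' as ``$\pi^N$ is an optimal policy'', since ties in \eqref{eq:optimal_mu} mean the optimal policy need not be unique.
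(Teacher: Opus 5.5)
Your proof is correct and is essentially the paper's argument: the convergence part is the same backward, stage-by-stage induction. The paper inducts directly on $\ell$, using only the hypothesis on $\pi^\ell$ plus an inner backward induction on $k$, so quantifying over all $\ell\ge m$ is a convenience rather than the necessity your closing remark suggests; your cost-improvement induction fills in the step that the paper's proof calls known but does not write out, and your reading of $\pi^N=\pi^*$ as ``$\pi^N$ is optimal'' is slightly more careful than the paper's $\mu_k^{\ell+1}=\mu_k^*$, which tacitly assumes ties in \eqref{eq:pi} and \eqref{eq:optimal_mu} are broken the same way.
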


\begin{proof}
    The cost improvement property of PI for infinite horizon problems is well-known. The cost improvement and finite termination of PI for finite horizon problems is also known but has not been discussed broadly. We provide here a short proof for completeness.

    We show the finite termination by induction. By definition, we have $J_{N,\pi^0}=J^*_N$, where $J_N^*$ is the function defined in the DP algorithm; see Section~\ref{sec:intro}. As a result, $\mu_{N-1}^1=\mu^*_{N-1}$. In addition, $J_{k,\pi^1}=J_{k}^*$, $k=N,N-1$. Suppose that $\pi^{\ell}$ satisfies $\mu_{k}^{\ell}=\mu_{k}^*$, $k=N-1,\dots,N-\ell$, and $J_{k,\pi^{\ell}}=J_{k}^*$, $k=N,\dots,N-\ell$. By comparing Eqs.~\eqref{eq:optimal_mu} and \eqref{eq:pi}, we have $\mu_{k}^{\ell+1}=\mu_{k}^*$ for $k=N-1,\dots,N-\ell-1$. Starting from $k=N$ and by using induction, we can show that $J_{k,\pi^{\ell+1}}=J_{k}^*$, $k=N,\dots,N-\ell-1$. This concludes the proof.
\end{proof}

As discussed in Section~\ref{sec:intro}, the PI algorithm \eqref{eq:pe_n}-\eqref{eq:pi}  requires policy evaluation and policy improvement computations for all states. This makes it intractable for problems with a large state space. In what follows, we will describe our on-line PI algorithm and some variants, where the aforementioned computation is carried out only for states encountered during real-time operation.

\section{On-Line Policy Iteration}\label{sec:online_pi}
In this section, we describe our on-line PI algorithm. We introduce some mild conditions under which the policy improvement property of exact PI is preserved.

\subsection{On-Line Policy Iteration and Cost Improvement}
We will assume  throughout the remainder of the paper that \emph{the initial state $x_0$ is fixed and that the optimal cost $J^*(x_0)$ is finite}. 
To describe our algorithm, we introduce a few definitions. We say that a trajectory $\{x_0,u_0,x_1,\dots,u_{N-1},x_N\}$ is \emph{feasible} if it satisfies
$$x_{k+1}=f_k(x_k,u_k),\quad u_k\in U_k(x_k),\qquad k=0,\dots,N-1.$$
Our algorithm makes use of a special function, called a \emph{generator} and denoted by $\mathcal{G}$. It takes as an input a feasible trajectory $\{x_0,u_0,x_1,\dots,u_{N-1},x_N\}$ and generates a policy $\pi$, i.e., 
\begin{equation}
    \label{eq:generator_def}
    \pi=\mathcal{G}(x_0,u_0,x_1,\dots,u_{N-1},x_N).
\end{equation}

Our on-line PI algorithm starts with a given policy $\pi^0$. At the $\ell$th iteration, we have a policy $\pi^\ell$. We apply this policy to obtain a corresponding feasible trajectory $\{x_0,u_0^{\ell+1},x_1^{\ell+1},\dots,u_{N-1}^{\ell+1},x_N^{\ell+1}\}$ and then use the generator to obtain a new policy $\pi^{\ell+1}$ through the following steps:
\begin{itemize}
    \item[(1)] Starting with the fixed initial state $x_0$, we compute $u_0^{\ell+1}$ via the minimization
    \begin{equation}
        \label{eq:pi_im_t1}
        u_0^{\ell+1}\in \arg\min_{u_0\in U_0(x_0)}\Big[g_0(x_0,u_0)+J_{1,\pi^\ell}\big(f_1(x_0,u_0)\big)\Big],
    \end{equation}
    with preference given to $\mu_0^\ell(x_0)$ in case of a tie, and we set 
    \begin{equation}
        \label{eq:sl_x1}
        x_1^{\ell+1}=f_0(x_0,u_0^{\ell+1}).
    \end{equation}
    \item[(2)] Going forward, for $k=1,\dots,N-1$, we compute $u_k^{\ell+1}$ via the minimization
    \begin{equation}
        \label{eq:pi_im_tk}
        u_k^{\ell+1}\in \arg\min_{u_k\in U_k(x_k^{\ell+1})}\Big[g_k(x_k^{\ell+1},u_k)+J_{k+1,\pi^{\ell}}\big(f_k(x_k^{\ell+1},u_k)\big)\Big],
    \end{equation}
    with preference given to $\mu_k^\ell(x_k^{\ell+1})$ in case of a tie, and we set
    \begin{equation}
        \label{eq:sl_xk}
        x_{k+1}^{\ell+1}=f_k(x_k^{\ell+1},u_k^{\ell+1}).
    \end{equation}
    \item[(3)] Finally, we obtain the new policy $\pi^{\ell+1}$ by using the generator $\mathcal G$, i.e.,  
    \begin{equation}
        \label{eq:sl_g}
        \pi^{\ell+1}=\mathcal{G}(x_0,u_0^{\ell+1},x_1^{\ell+1},\dots,u_{N-1}^{\ell+1},x_N^{\ell+1}).
    \end{equation}
\end{itemize}

Note that since the computation involves states on a single trajectory and moves forward in stages, it is well-suited for path planning or control problems for repetitive tasks. In such contexts, starting with the fixed initial state, the algorithm is carried out to complete the task, and the complete trajectory is obtained as a by-product. Provided that the computation of a new policy through the use of the generator is relatively fast, the on-line PI algorithm can be carried out with little additional computation. Regarding the values of cost functions involved in the calculations \eqref{eq:pi_im_t1} and \eqref{eq:pi_im_tk}, they can be obtained via either simple calculation, as in discrete optimization applications, or through real-time simulation, as in path planning problems.

The important theoretical question is whether the policies obtained through the generator are cost improving. This is guaranteed under a mild condition. In particular, we say that the generator is \emph{consistent} if given any feasible trajectory $\{x_0,u_0,x_1,\dots,u_{N-1},x_N\}$, the generated policy $\pi=\{\mu_0,\mu_1,\dots,\mu_N\}$ satisfies
\begin{equation}
    \label{eq:consistency}
    \mu_k(x_k)=u_k,\qquad k=0,\dots,N-1.
\end{equation}
Thus, a generator $\mathcal G$ is consistent if \emph{the generated policy reproduces the tail of the given trajectory from every state along that trajectory}. A related tail-preservation idea was introduced in \cite{bertsekas1997rollout}. In the present work, consistency serves a different purpose: it ensures that the improvement of the computed trajectory via Eqs.~\eqref{eq:pi_im_t1}-\eqref{eq:sl_xk} is inherited by the generated policy. Figure~\ref{fig:consistency} illustrates how this improvement property may fail when the generator is inconsistent.

\begin{figure}[t]
    \centering
    \includegraphics[width=0.6\linewidth]{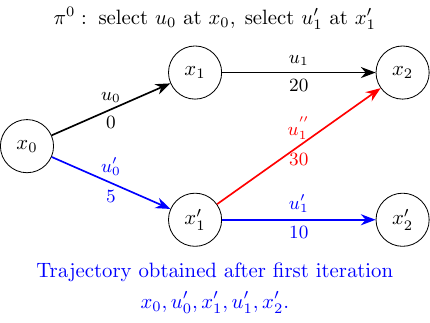}
    \caption{Example showing that the cost-improvement property of on-line PI may fail when the generator $\mathcal G$ is inconsistent. The example involves five states, represented by the nodes in the graph. The state transitions are given by the arcs, with the corresponding controls shown above the arcs and the stage costs shown below them. Let $\pi^0$ be the initial policy that selects $u_0$ at $x_0$ and $u_1'$ at $x_1'$ (so the trajectory under $\pi^0$ is shown in black). Then the trajectory produced by the first on-line PI iteration is $\{x_0, u_0', x_1', u_1', x_2'\}$ (shown in blue), which is optimal. Its total cost is $15$, smaller than the cost of $\pi^0$ at $x_0$, namely $J_{\pi^0}(x_0)=20$. Suppose now that the generator $\mathcal G$ takes this trajectory and produces a policy $\pi^1$ that selects $u_0'$ at $x_0$ and $u_1^{''}$ at $x_1'$. Then, starting from the trajectory state $x_1'$, the policy $\pi^1$ chooses a control (shown in red) different from the one in the given trajectory. As a result, $\pi^1$ does not preserve the optimality of the trajectory $\{x_0, u_0', x_1', u_1', x_2'\}$. It can be seen that $\pi^1$ is not consistent, since at state $x_1'$ it selects the control $u_1''$, which is not equal to $\mu_1(x_1')$, i.e., $u_1'$.}
    \label{fig:consistency}
\end{figure}

\begin{proposition}[Properties of On-Line PI]\label{prop:sl_property}
    Consider the sequence of policies $\pi^0,\pi^1,\dots$, computed via the on-line PI algorithm \eqref{eq:pi_im_t1}-\eqref{eq:sl_g}. Suppose that the generator is consistent. Then: 
    \begin{itemize}
        \item[(a)] The policies are improving at $x_0$ in the sense that 
    \begin{equation}
        \label{eq:sl_imp}
        J_{\pi^{\ell+1}}(x_0)\leq J_{\pi^{\ell}}(x_0), \qquad \hbox{for all $\ell$}, 
    \end{equation}
    and the sequence $\{J_{\pi^\ell}(x_0)\}$ is convergent.
    \item[(b)] If equality in \eqref{eq:sl_imp} holds for some $\bar \ell$, we have $\pi^{\ell+1}=\pi^\ell$ for all $\ell\geq \bar \ell$.
    \end{itemize}
    
\end{proposition}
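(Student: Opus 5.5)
The plan is the standard rollout cost-improvement argument run along the single computed trajectory, with consistency used to transfer that trajectory's cost to the generated policy. Fix $\ell$, and let $T^{\ell+1}=\{x_0,u_0^{\ell+1},x_1^{\ell+1},\dots,u_{N-1}^{\ell+1},x_N^{\ell+1}\}$ be the trajectory produced by \eqref{eq:pi_im_t1}--\eqref{eq:sl_xk}, with $x_0^{\ell+1}=x_0$. For $k=0,\dots,N$ I would define the partial costs
\begin{equation*}
C_k=\sum_{i=0}^{k-1} g_i\big(x_i^{\ell+1},u_i^{\ell+1}\big)+J_{k,\pi^\ell}\big(x_k^{\ell+1}\big).
\end{equation*}
These satisfy $C_0=J_{\pi^\ell}(x_0)$, and since $J_{N,\pi^\ell}=g_N$, $C_N$ is the total cost of $T^{\ell+1}$.

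The key step is to show $C_{k+1}\le C_k$. By the minimization \eqref{eq:pi_im_tk}, $g_k(x_k^{\ell+1},u_k^{\ell+1})+J_{k+1,\pi^\ell}(x_{k+1}^{\ell+1})$ is at most the bracketed expression evaluated at $u_k=\mu_k^\ell(x_k^{\ell+1})$. By \eqref{eq:pe_k}, that expression equals $J_{k,\pi^\ell}(x_k^{\ell+1})$. Chaining these inequalities gives that the cost of $T^{\ell+1}$ is at most $J_{\pi^\ell}(x_0)$.

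Next comes consistency. Because $\pi^{\ell+1}=\mathcal G(T^{\ell+1})$ satisfies $\mu_k^{\ell+1}(x_k^{\ell+1})=u_k^{\ell+1}$, a forward induction on $k$ starting from the common state $x_0$ shows that the trajectory of $\pi^{\ell+1}$ from $x_0$ is exactly $T^{\ell+1}$. Hence $J_{\pi^{\ell+1}}(x_0)=C_N\le C_0=J_{\pi^\ell}(x_0)$, which is \eqref{eq:sl_imp}. For convergence, the sequence $\{J_{\pi^\ell}(x_0)\}$ is monotonically nonincreasing and bounded below by $J^*(x_0)$, which is finite by the standing assumption, so it converges.

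For (b), suppose $J_{\pi^{\bar\ell+1}}(x_0)=J_{\pi^{\bar\ell}}(x_0)$. Then $C_0=C_N$, and since the $C_k$ are nonincreasing, every inequality $C_{k+1}\le C_k$ holds with equality. I would then argue by forward induction on $k$ that $T^{\bar\ell+1}$ coincides with the trajectory of $\pi^{\bar\ell}$ from $x_0$:
\begin{itemize}
\item Suppose the two trajectories agree up to the state $x_k^{\bar\ell+1}$.
\item Equality $C_{k+1}=C_k$ means that $\mu_k^{\bar\ell}(x_k^{\bar\ell+1})$ attains the minimum in \eqref{eq:pi_im_tk}.
\item The tie-breaking rule then forces $u_k^{\bar\ell+1}=\mu_k^{\bar\ell}(x_k^{\bar\ell+1})$, so the trajectories also agree at the next state.
\end{itemize}

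For $\bar\ell\ge1$, the consistency argument above shows that the trajectory of $\pi^{\bar\ell}$ from $x_0$ is $T^{\bar\ell}$. Therefore $T^{\bar\ell+1}=T^{\bar\ell}$, and since $\mathcal G$ is a function, $\pi^{\bar\ell+1}=\mathcal G(T^{\bar\ell+1})=\mathcal G(T^{\bar\ell})=\pi^{\bar\ell}$. Once two consecutive policies coincide, the deterministic iteration repeats itself, so $\pi^{\ell+1}=\pi^\ell$ for all $\ell\ge\bar\ell$.

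The step I expect to be the main obstacle is the edge case $\bar\ell=0$. There $\pi^0$ is arbitrary rather than an output of $\mathcal G$, so the argument only yields $T^1=T^0$ (the trajectory of $\pi^0$) and $\pi^1=\mathcal G(T^0)$; in general $\pi^1\ne\pi^0$ as a full policy. What remains true is that $\pi^1$ and $\pi^0$ agree along the trajectory from $x_0$, and that $\pi^{\ell}=\pi^1$ for all $\ell\ge1$. I would therefore state (b) for $\bar\ell\ge1$, or interpret equality of policies as equality on the states of the generated trajectory. Apart from this, the proof relies on the tie-breaking convention, which is exactly what makes the equality case rigid.
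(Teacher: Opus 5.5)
Your proof is correct and is essentially the paper's argument. The paper runs a backward induction proving $J_{k,\pi^{\ell+1}}(x_k^{\ell+1})\le J_{k,\pi^\ell}(x_k^{\ell+1})$ for $k=N,\dots,0$, using consistency at every stage. You instead telescope the rollout bound forward through $C_k$ and invoke consistency once, to identify $J_{\pi^{\ell+1}}(x_0)$ with the cost of $T^{\ell+1}$. Part (b) uses the same equality propagation and tie-breaking in both versions.

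Your $\bar\ell=0$ diagnosis is right, and it applies to the paper's own proof. The paper's last step, $\pi^{\bar\ell+1}=\mathcal G(T^{\bar\ell+1})=\mathcal G(T^{\bar\ell})=\pi^{\bar\ell}$, tacitly assumes $\pi^{\bar\ell}=\mathcal G(T^{\bar\ell})$. That holds for $\bar\ell\ge1$. It also holds when $\pi^0$ is generated from its own trajectory, as in the assignment experiments where $\mu_k^0\equiv u_k^0$.

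However, your proposed salvage is false. Equality at $\bar\ell=0$ does \emph{not} give $\pi^\ell=\pi^1$ for all $\ell\ge1$. Reading equality of policies as agreement on the generated trajectory does not rescue (b) either. The reason is that $\pi^1$ may differ from $\pi^0$ off $T^0$. That changes the values $J_{k+1,\pi^1}\bigl(f_k(x_k,u_k)\bigr)$ at off-trajectory successors $u_k\ne u_k^1$, so the next rollout can strictly improve.

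Here is a counterexample:
\begin{itemize}
\item Take $N=2$ and $U_0(x_0)=\{a,b\}$, leading to states $y_a$ and $y_b$, with $U_1(y_a)=\{c\}$ and $U_1(y_b)=\{c,d\}$.
\item The only nonzero costs are $g_1(y_a,c)=5$, $g_1(y_b,c)=10$, and $g_1(y_b,d)=1$.
\item Let $\pi^0$ choose $a$ at $x_0$ and $c$ at stage $1$.
\item Let $\mathcal G$ copy the trajectory's controls on the trajectory and choose the least-stage-cost control elsewhere. This generator is consistent.
\end{itemize}
Rollout from $\pi^0$ compares $5$ with $10$, so $T^1=T^0$ and $J_{\pi^1}(x_0)=J_{\pi^0}(x_0)=5$. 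But $\pi^1$ chooses $d$ at $y_b$, so $\pi^1\ne\pi^0$. The next rollout compares $5$ with $J_{1,\pi^1}(y_b)=1$ and selects $b$, giving $J_{\pi^2}(x_0)=1$. Thus $\pi^2$ differs from $\pi^1$ even at the trajectory state $x_0$.

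The only sound repair is the first option you list: restrict (b) to $\bar\ell\ge1$, or assume $\pi^0=\mathcal G(T^0)$ with $T^0$ the trajectory of $\pi^0$.
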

\begin{proof}
    (a) Since the generator is consistent, we have for all $\ell$, 
    \begin{equation}
        \label{eq:pe_data}
        \begin{aligned}
            J_{0,\pi^{\ell}}(x_0)= & g_0\big(x_0,\mu_0^\ell(x_0)\big)+J_{1,\pi^\ell}(x_1^\ell),\\
            J_{k,\pi^{\ell}}(x_k^\ell)= & g_k\big(x_k^\ell,\mu_k^\ell(x_k^\ell)\big)+J_{k+1,\pi^\ell}(x_{k+1}^\ell),\quad k=1,\dots,N-1. 
        \end{aligned}
    \end{equation}
    We will prove by induction that
    \begin{equation}
        \label{eq:prop1_ineq}
        \begin{aligned}
            J_{0,\pi^{\ell+1}}(x_0)\leq& J_{0,\pi^{\ell}}(x_0),\\
            J_{k,\pi^{\ell+1}}(x_k^{\ell+1})\leq & J_{k,\pi^{\ell}}(x_k^{\ell+1}),\quad k=1,\dots,N.
        \end{aligned}
    \end{equation}
    Clearly, the inequality \eqref{eq:prop1_ineq} holds for $k=N$ since 
    $$J_{N,\pi^{\ell+1}}(x_N^{\ell+1})=g_N(x_N^{\ell+1})=J_{N,\pi^{\ell}}(x_N^{\ell+1}).$$
    Suppose that the inequality \eqref{eq:prop1_ineq} holds for $k+1$, where $k=1,2,\dots,N-1$. Then we have
    \begin{equation}
        \label{eq:proof_ineq}
        \begin{aligned}
        J_{k,\pi^{\ell+1}}(x_k^{\ell+1})=&g_k\big(x_k^{\ell+1},\mu_k^{\ell+1}(x_k^{\ell+1})\big)+J_{k+1,\pi^{\ell+1}}(x_{k+1}^{\ell+1})\\
        \leq & g_k\big(x_k^{\ell+1},\mu_k^\ell(x_k^{\ell+1})\big)+J_{k+1,\pi^{\ell}}(x_{k+1}^{\ell+1})\\
        =&g_k\big(x_k^{\ell+1},\mu_k^{\ell+1}(x_k^{\ell+1})\big)+J_{k+1,\pi^{\ell}}\Big(f_k\big(x_k^{\ell+1},\mu_k^{\ell+1}(x_k^{\ell+1})\big)\Big)\\
        =&\min_{u_k\in U_k(x_k^{\ell+1})}\Big[g_k(x_k^{\ell+1},u_k)+J_{k+1,\pi^{\ell}}\big(f_k(x_k^{\ell+1},u_k)\big)\Big]\\
        \leq&g_k\big(x_k^{\ell+1},\mu_k^{\ell}(x_k^{\ell+1})\big)+J_{k+1,\pi^{\ell}}\Big(f_k\big(x_k^{\ell+1},\mu_k^{\ell}(x_k^{\ell+1})\big)\Big)\\
        =&J_{k,\pi^{\ell}}(x_k^{\ell+1}),
    \end{aligned}
    \end{equation}
    where the first equality is due to Eq.~\eqref{eq:pe_data}, the first inequality holds by the induction hypothesis; the second and the third equalities hold by the consistency of the generator and the definition of $u_k^{\ell+1}$; the last equality is the DP equation for policy $\pi^\ell$. 

    Finally, applying inequality \eqref{eq:proof_ineq} with $k=0$ and $x_0$ in place of $x_k^{\ell+1}$, we have 
    $$J_{0,\pi^{\ell+1}}(x_0)\leq J_{0,\pi^{\ell}}(x_0).$$
    Since $J_{0,\pi^{\ell+1}}(x_0)=J_{\pi^{\ell+1}}(x_0)$ and $J_{0,\pi^{\ell}}(x_0)=J_{\pi^{\ell}}(x_0)$, we obtain the desired inequality \eqref{eq:sl_imp}. Since the sequence $\{J_{\pi^\ell}(x_0)\}$ is monotonically nonincreasing and lower bounded by $J^*(x_0)$, it must be convergent.

    (b) Suppose that equality holds in \eqref{eq:sl_imp} for some $\bar\ell$, i.e., 
    \begin{equation}
        \label{eq:equiv_con}
        J_{0,\pi^{\bar\ell+1}}(x_0)=J_{0,\pi^{\bar\ell}}(x_0).
    \end{equation}
    We will show by induction that for $k=1,\dots,N$,
    \begin{equation}
        \label{eq:converge}
            u_{k-1}^{\bar\ell+1}= u_{k-1}^{\bar\ell},\qquad 
            x_k^{\bar\ell+1}= x_k^{\bar\ell},\qquad
            J_{k,\pi^{\bar\ell+1}}(x_k^{\bar\ell})=  J_{k,\pi^{\bar\ell}}(x_k^{\bar\ell}).
    \end{equation}
    
    Since the inequality \eqref{eq:proof_ineq} holds with $k=0$, $\bar\ell$ in place of $\ell$, and $x_0$ in place of $x_k^{\ell+1}$, Eq.~\eqref{eq:equiv_con} implies that all the inequalities in \eqref{eq:proof_ineq} become equalities, i.e.,
    \begin{equation}
    \label{eq:induction_1}
         g_0\big(x_0,\mu_0^{\bar\ell}(x_0)\big)+J_{1,\pi^{\bar\ell+1}}(x_{1}^{\bar\ell+1})
        =  g_0\big(x_0,\mu_0^{\bar\ell}(x_0)\big)+J_{1,\pi^{\bar\ell}}(x_{1}^{\bar\ell+1}),
    \end{equation}
    \begin{equation}
    \label{eq:induction_2}
        \begin{aligned}
        \min_{u_k\in U_0(x_0)}\Big[g_0(x_0,u_0)&+J_{1,\pi^{\bar\ell}}\big(f_0(x_0,u_0)\big)\Big]\\
        =&g_0\big(x_0,\mu_0^{\bar\ell}(x_0\big)+J_{1,\pi^{\bar\ell}}\Big(f_0\big(x_0,\mu_0^{\bar\ell}(x_0)\big)\Big).
        \end{aligned}
    \end{equation}
    Equation~\eqref{eq:induction_1} implies that
    $$J_{1,\pi^{\bar\ell+1}}(x_{1}^{\bar\ell+1})= J_{1,\pi^{\bar\ell}}(x_{1}^{\bar\ell+1}),$$
    so Eq.~\eqref{eq:induction_2} implies $u_0^{\bar\ell+1}=\mu_0^{\bar\ell}(x_0)$ in view of our tie-breaking rule. Due to the consistency of the generator $\mathcal G$, we have $\mu_0^{\bar\ell}(x_0)=u_0^{\bar\ell}$, which yields $u_0^{\bar\ell+1}=u_0^{\bar\ell}$ and $x_1^{\bar\ell+1}=x_1^{\bar\ell}$. Thus, we have shown Eq.~\eqref{eq:converge} for $k=1$. 
    
    Suppose Eq.~\eqref{eq:converge} holds for the typical index $k$. Then the induction step can be shown by using similar arguments as above with $k$ in place of $0$, $x_k^{\bar\ell}$ in place of $x_0$, starting with the equality $J_{k,\pi^{\bar\ell+1}}(x_k^{\bar\ell+1})=  J_{k,\pi^{\bar\ell}}(x_k^{\bar\ell+1})$ and using also the fact that $x_k^{\bar\ell+1}=x_k^{\bar\ell}$. The details are left for the reader. As a result, we have that $u_{k-1}^{\bar\ell+1}=u_{k-1}^{\bar\ell}$ and $x_k^{\bar\ell+1}=x_k^{\bar\ell}$, $k=1,\dots,N$. From the definition of the generator $\mathcal G$, cf. Eq.~\eqref{eq:generator_def}, we conclude that $\pi^{\bar\ell+1}=\pi^{\bar\ell}$. 
\end{proof}

\subsection{Simplified On-Line Policy Iteration}
One potential challenge in our algorithm is the minimization in Eqs.~\eqref{eq:pi_im_t1} and \eqref{eq:pi_im_tk}, which may not be easily done in real time. This typically occurs when the sets $U_k(x_k)$ are subsets of some Euclidean spaces, while the values $J_{k+1,\pi^{\ell}}(x_{k+1})$ can only be evaluated through simulation, one state at a time. In such cases, it is necessary to apply a simplified variant of the on-line PI algorithm, which we describe next. In this variant the constraint sets $U_k(x_k^{\ell+1})$ in the policy improvement operation \eqref{eq:pi_im_tk} are replaced by `simpler' and iteration-dependent sets $\overline U_k(x_k^{\ell+1},\mu_k^\ell)$, as we describe below. 

At the $\ell$th iteration, given a policy $\pi^\ell$, we obtain a new policy $\pi^{\ell+1}$ through the following steps:
\begin{itemize}
    \item[(1)] Starting with the fixed initial state, we construct a set $\overline U_0(x_0,\mu_0^\ell)$, which is a subset of $U_0(x_0)$. We then compute $u_0^{\ell+1}$ via the minimization
    \begin{equation}
        \label{eq:ssl_im_t1}
        u_0^{\ell+1}\in \arg\min_{u_0\in \overline U_0(x_0,\mu_0^\ell)}\Big[g_0(x_0,u_0)+J_{1,\pi^\ell}\big(f_1(x_0,u_0)\big)\Big],
    \end{equation}
    with preference given to $\mu_0^\ell(x_0)$ if it belongs to the set $\overline U_0(x_0,\mu_0^\ell)$ and attains the minimum above, and we set 
    \begin{equation}
        \label{eq:ssl_x1}
        x_1^{\ell+1}=f_0(x_0,u_0^{\ell+1}).
    \end{equation}
    \item[(2)] Going forward, for $k=1,\dots,N-1$, we construct  sets $\overline U_k(x_k^{\ell+1},\mu_k^\ell)$, which are subsets of $U_k(x_k^{\ell+1})$. We then compute $u_k^{\ell+1}$ via the minimization
    \begin{equation}
        \label{eq:ssl_im_tk}
        u_k^{\ell+1}\in \arg\min_{u_k\in \overline U_k(x_k^{\ell+1},\mu_k^\ell)}\Big[g_k(x_k^{\ell+1},u_k)+J_{k+1,\pi^{\ell}}\big(f_k(x_k^{\ell+1},u_k)\big)\Big],
    \end{equation}
    with preference given to $\mu_k^\ell(x_k^{\ell+1})$ if it belongs to the set $\overline U_k(x_k^{\ell+1},\mu_k^\ell)$ and attains the minimum above, and we set
    \begin{equation}
        \label{eq:ssl_xk}
        x_{k+1}^{\ell+1}=f_k(x_k^{\ell+1},u_k^{\ell+1}).
    \end{equation}
    \item[(3)] Finally, we obtain the new policy $\pi^{\ell+1}$ by using the generator $\mathcal G$, i.e., 
    \begin{equation}
        \label{eq:ssl_g}
        \pi^{\ell+1}=\mathcal{G}(x_0,u_0^{\ell+1},x_1^{\ell+1},\dots,u_{N-1}^{\ell+1},x_N^{\ell+1}).
    \end{equation}
\end{itemize} 

We call the preceding algorithm \emph{simplified on-line PI}, since the minimization uses the subsets $\overline U_0(x_0,\mu_0^\ell)$ and $\overline U_k(x_k^{\ell+1},\mu_k^\ell)$, of $U_0(x_0)$ and $U_k(x_k^{\ell+1})$, $k=1,\dots,N-1$, respectively. We can show that the desired cost improvement property is preserved, provided that these subsets satisfy the following condition:
\begin{equation}
    \label{eq:ssl_set}
    \mu_0^\ell(x_0)\in \overline U_0(x_0,\mu^\ell_0),\quad \mu_k^\ell(x_k^{\ell+1})\in \overline U_k(x_k^{\ell+1},\mu^\ell_k),\qquad k=1,\dots,N-1.
\end{equation}
Intuitively, this condition ensures that when the controls $u_k^{\ell+1}$, $k=0,\dots,N-1$ are selected via the minimizations \eqref{eq:ssl_im_t1} and \eqref{eq:ssl_im_tk}, they are compared against the controls selected by $\mu_k^\ell$. The result is provided in the following proposition. Its proof is nearly identical to that of Prop.~\ref{prop:sl_property} and is left for the reader.

\begin{proposition}[Properties of Simplified On-Line PI]\label{prop:ssl_property}
    Consider the sequence of policies $\pi^0,\pi^1,\dots$, computed via the simplified on-line PI algorithm \eqref{eq:ssl_im_t1}-\eqref{eq:ssl_g}. Suppose that the generator is consistent and that the subsets $\overline U_0(x_0,\mu_0^\ell)$, $\overline U_k(x_k^{\ell+1},\mu_k^\ell)$, $k=1,\dots,N-1$, satisfy condition \eqref{eq:ssl_set}.
    \begin{itemize}
        \item[(a)] The policies are improving at $x_0$ in the sense that 
    \begin{equation}
        \label{eq:ssl_imp}
        J_{\pi^{\ell+1}}(x_0)\leq J_{\pi^{\ell}}(x_0), \qquad \hbox{for all $\ell$}, 
    \end{equation}
    and the sequence $\{J_{\pi^\ell}(x_0)\}$ is convergent.
    \item[(b)] If equality in \eqref{eq:ssl_imp} holds for some $\bar \ell$, we have $\pi^{\ell+1}=\pi^\ell$ for all $\ell\geq \bar \ell$.
    \end{itemize}
\end{proposition}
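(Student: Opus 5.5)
We follow the proof of Prop.~\ref{prop:sl_property} step by step, changing only the places where the full constraint set $U_k(x_k^{\ell+1})$ appears. As there, consistency of $\mathcal G$ means that $\pi^\ell$ reproduces the trajectory $\{x_0,u_0^\ell,x_1^\ell,\dots,x_N^\ell\}$ from which it was generated, so that $\mu_k^\ell(x_k^\ell)=u_k^\ell$. It also means that $\pi^{\ell+1}$ reproduces the newly computed trajectory, so that $\mu_k^{\ell+1}(x_k^{\ell+1})=u_k^{\ell+1}$ and $J_{k,\pi^{\ell+1}}(x_k^{\ell+1})=g_k(x_k^{\ell+1},u_k^{\ell+1})+J_{k+1,\pi^{\ell+1}}(x_{k+1}^{\ell+1})$. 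For part (a) we would prove by backward induction on $k$ that $J_{k,\pi^{\ell+1}}(x_k^{\ell+1})\le J_{k,\pi^\ell}(x_k^{\ell+1})$, starting from $k=N$, where both sides equal $g_N(x_N^{\ell+1})$.

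For the induction step we reproduce the chain \eqref{eq:proof_ineq}. First we use the consistency relation above, then the induction hypothesis at $k+1$. This bounds $J_{k,\pi^{\ell+1}}(x_k^{\ell+1})$ by the minimum over $\overline U_k(x_k^{\ell+1},\mu_k^\ell)$ of $g_k(x_k^{\ell+1},u_k)+J_{k+1,\pi^\ell}\big(f_k(x_k^{\ell+1},u_k)\big)$. The single new ingredient is at the next step. Condition \eqref{eq:ssl_set} guarantees that $\mu_k^\ell(x_k^{\ell+1})$ belongs to the restricted set, so this minimum is at most the value at $u_k=\mu_k^\ell(x_k^{\ell+1})$. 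By the DP equation for $\pi^\ell$, that value equals $J_{k,\pi^\ell}(x_k^{\ell+1})$. Setting $k=0$ gives \eqref{eq:ssl_imp}. Convergence then follows because the sequence $\{J_{\pi^\ell}(x_0)\}$ is nonincreasing and bounded below by the finite value $J^*(x_0)$.

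For part (b) we argue as in Prop.~\ref{prop:sl_property}(b) and prove \eqref{eq:converge} by forward induction. If equality holds at $\bar\ell$, then every inequality in the chain at $k=0$ is tight. This gives $J_{1,\pi^{\bar\ell+1}}(x_1^{\bar\ell+1})=J_{1,\pi^{\bar\ell}}(x_1^{\bar\ell+1})$. It also shows that $\mu_0^{\bar\ell}(x_0)$ attains the minimum over $\overline U_0(x_0,\mu_0^{\bar\ell})$. Since $\mu_0^{\bar\ell}(x_0)$ lies in that set by \eqref{eq:ssl_set}, the tie-breaking rule forces $u_0^{\bar\ell+1}=\mu_0^{\bar\ell}(x_0)=u_0^{\bar\ell}$, and hence $x_1^{\bar\ell+1}=x_1^{\bar\ell}$. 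Propagating the equality $J_{k,\pi^{\bar\ell+1}}(x_k^{\bar\ell+1})=J_{k,\pi^{\bar\ell}}(x_k^{\bar\ell+1})$ stage by stage shows that the whole trajectory is unchanged. The generator therefore receives the same input, and we get $\pi^{\bar\ell+1}=\pi^{\bar\ell}$. The statement for all $\ell\ge\bar\ell$ then follows by repeating the argument: once the policy repeats, the algorithm produces the same trajectory and the same policy at every later iteration.

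The main point of care is this tie-breaking step in part (b). Under the simplified rule, preference is given to $\mu_k^\ell(x_k^{\ell+1})$ only when it belongs to $\overline U_k$ and attains the minimum there, so both facts must be established. Membership comes from \eqref{eq:ssl_set}, and attainment comes from the tightness of the chain. A related subtlety is that the sets $\overline U_k$ may depend on the iteration. This does no harm, because the induction only uses the set built at iteration $\bar\ell$ along the new trajectory, and that trajectory coincides with the old one.
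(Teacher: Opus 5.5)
Your proof is correct and is the adaptation of the proof of Prop.~\ref{prop:sl_property} that the paper leaves to the reader, with condition \eqref{eq:ssl_set} used in exactly the two places you identify: bounding the restricted minimum by the value at $\mu_k^\ell(x_k^{\ell+1})$ in (a), and making the tie-breaking rule applicable in (b). Two minor cautions: (i) as in the paper's own argument, the step ``same input to $\mathcal G$, hence $\pi^{\bar\ell+1}=\pi^{\bar\ell}$'' assumes $\pi^{\bar\ell}$ was itself generated from that trajectory, which holds automatically only for $\bar\ell\ge 1$ (the given $\pi^0$ need not equal $\mathcal G$ applied to its own trajectory); (ii) your remark that iteration dependence of the sets ``does no harm'' is right for that single step but not for the extension to all $\ell\ge\bar\ell$, which needs $\overline U_k$ to depend on the iteration only through its arguments $(x_k^{\ell+1},\mu_k^\ell)$ --- this is exactly why the paper notes that (b) does not extend to randomly regenerated sets.
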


Compared with the standard on-line PI algorithm, the simplified version requires the computation of the sets $\overline{U}_k(x_k,\mu_k)$, $k=0,\dots,N-1$, which satisfy the condition \eqref{eq:ssl_set}. While it may not be clear how such sets can be constructed in general, there is a principled and very effective method to do so in the case where the control $u_k$ is composed of multiple components (the multiagent case). We discuss this case next, and we show how to compute the sets $\overline{U}_k(x_k,\mu_k)$ to simplify greatly the minimizations in Eqs.~\eqref{eq:ssl_im_t1} and \eqref{eq:ssl_im_tk}.

\subsection{Multiagent On-Line Policy Iteration}
Suppose that the control $u_k$ is composed of multiple components. Our on-line PI algorithm for this case treats each component of $u_k$ as an `agent,' and performs minimization `one agent at a time,' either approximately or exactly. This minimization approach was proposed and justified mathematically by the last author in the papers \cite{bertsekas2020multiagent,bertsekas2021multiagent}.\footnote{Note that the term `agent' in our terminology refers to a coordinate (component) of the control vector used for decomposition of the optimization problem. This differs from some control and multiagent systems literature (e.g., \cite{olfati2007consensus,dimarogonas2011distributed,chen2024cooperative}), where an `agent' denotes an independent decision-making entity, often corresponding to a physical subsystem.} We first consider the case where the control components are continuous, so the minimization for each `agent' can only be done approximately. Then we comment on the case where the control components are discrete.   

To this end, let us assume that for all $x_k$ and $k$, \emph{the control constraint set $U_k(x_k)$ is a subset of an $n$-dimensional Euclidean space}. We consequently write the control $u_k$ as $(u_{k,1},\dots,u_{k,n})$, where $u_{k,i}$ is the $i$th component of $u_k$ (which is viewed as an agent, as noted earlier). Similarly, given a policy $\pi=\{\mu_0,\dots,\mu_{N-1}\}$, we denote by $\mu_{k,i}(x_k)$ the $i$th component of $\mu_k(x_k)$. 

We illustrate the key ideas of the construction for the case of two agents ($n=2$) in Fig.~\ref{fig:simple_pi}. At state $x_k^{\ell+1}$ during iteration $(\ell+1)$, the  constraint set $U_k(x_k^{\ell+1})$ is shown as the green ellipsoid. Starting from the control $\mu_k^{\ell}(x_k^{\ell+1})$ (shown as a black square), we first construct a set $\hat U_{k,1}(x_k^{\ell+1},\mu_k^\ell)$, which consists of all the feasible controls whose second component is held fixed at $\mu_{k,2}^\ell(x_k^{\ell+1})$, while the first component takes some discrete values (red dots along the horizontal line), including $\mu_{k,1}^\ell(x_k^{\ell+1})$. We minimize some function (to be specified later) over this set to obtain $u_{k,1}^{\ell+1}$, yielding the intermediate point $\big(u_{k,1}^{\ell+1},\mu_{k,2}^\ell(x_k^{\ell+1})\big)$ (black dot). Next, we construct a set $\hat U_{k,2}(x_k^{\ell+1},\mu_k^\ell)$ by fixing the first component at the newly optimized $u_{k,1}^{\ell+1}$ and varying the second component over some discrete values (blue dots along the vertical line), including $\mu_{k,2}^\ell(x_k^{\ell+1})$. Minimizing over this set determines $u_{k,2}^{\ell+1}$, giving the final control $u_k^{\ell+1}$ (black star). The procedure generalizes to $n$ components in an obvious way.

\begin{figure}[t]
    \centering
    \includegraphics[width=\linewidth]{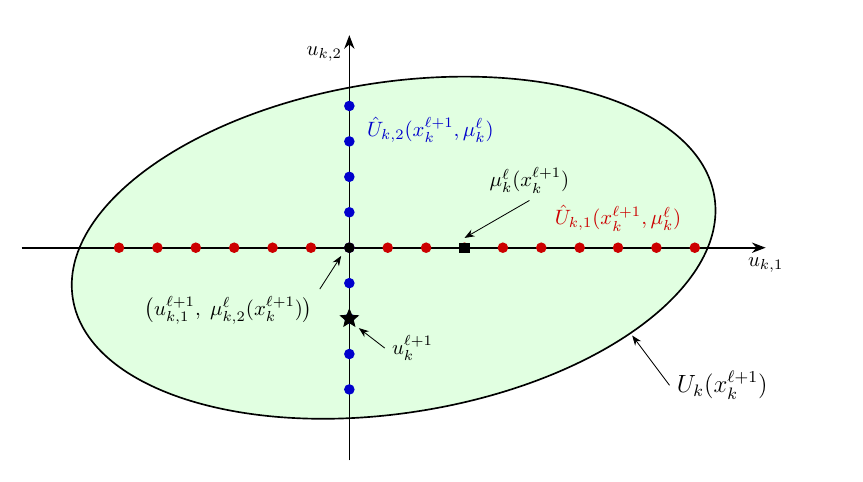}
    \caption{Illustration of the coordinate-wise discretization construction in simplified on-line PI for $n=2$. The green region is the constraint set $U_k(x_k^{\ell+1})$. Starting from the control $\mu_k^\ell(x_k^{\ell+1})$ (black square), we first minimize over the discretized grid along the first coordinate [red dots, forming the set $\hat U_{k,1}(x_k^{\ell+1},\mu_k^\ell)$] to obtain the intermediate point $(u_{k,1}^{\ell+1},\, \mu_{k,2}^\ell(x_k^{\ell+1}))$ (black circle). We then minimize over the grid along the second coordinate [blue dots, forming the set $\hat U_{k,2}(x_k^{\ell+1},\mu_k^\ell)$], yielding the final control $u_k^{\ell+1}$ (black star).}
    \label{fig:simple_pi}
\end{figure}

In particular, let $\rho_i$, $i=1,\dots,n$, be some selectable constants, which define discretization resolutions, and $Z$ be a bounded subset of integers that includes $0$, which is used to define the discretization points.\footnote{Our multiagent PI can be extended to the case where the constants $\rho_i$, $i=1,\dots,n$ and/or the set $Z$ vary with the stage $k$ and/or the iteration number $\ell$. We keep them constant for simplicity.} Multiagent on-line PI proceeds as simplified on-line PI, except that the single minimization in Eqs.~\eqref{eq:ssl_im_t1} and \eqref{eq:ssl_im_tk} is replaced by a series of minimizations. Specifically, during iteration $(\ell+1)$, at state $x_k^{\ell+1}$, we first construct a set $\hat U_{k,1}(x_k^{\ell+1},\mu_k^\ell)$ of the form
\begin{equation}
    \label{eq:hat_u_1}
    \begin{aligned}
        \hat U_{k,1}(x_k^{\ell+1},\mu_k^\ell)=&\Big\{u_k\,\big|\,u_{k,1}=\mu_{k,1}^\ell(x_k^{\ell+1})+m\rho_1, m\in Z,\\
    & \qquad \;\,u_{k,i}=\mu_{k,i}^\ell(x_k^{\ell+1}),\,i=2,\dots,n,\,u_k\in  U_{k}(x_k^{\ell+1})\Big\}.
    \end{aligned}
\end{equation}
We compute a control $\hat u_{k}$ via the minimization
\begin{equation}
    \label{eq:hat_u_min_1}
    \hat u_k\in \arg\min_{u_k\in \hat U_{k,1}(x_k^{\ell+1},\mu_k^\ell)}\Big[g_k(x_k^{\ell+1},u_k)+J_{k+1,\pi^{\ell}}\big(f_k(x_k^{\ell+1},u_k)\big)\Big],
\end{equation}
with preference given to $\mu_k^\ell(x_k^{\ell+1})$ in case of a tie, and set $u_{k,1}^{\ell+1}=\hat u_{k,1}$. Having computed $u_{k,1}^{\ell+1},\dots,u_{k,i-1}^{\ell+1}$, $i=2,\dots,n-1$, we construct the set $\hat U_{k,i}(x_k^{\ell+1},\mu_k^\ell)$ as
\begin{equation}
    \label{eq:hat_u_i}
    \begin{aligned}
        \hat U_{k,i}(x_k^{\ell+1},\mu_k^\ell)=&\Big\{u_k\,\big|\,u_{k,j}=u_{k,j}^{\ell+1},\,j=1,\dots,i-1,\\
        &\qquad \;\,u_{k,i}=\mu_{k,i}^\ell(x_k^{\ell+1})+m\rho_i, m\in Z,\\
    & \qquad \;\,u_{k,j}=\mu_{k,j}^\ell(x_k^{\ell+1}),\,j=i+1,\dots,n,\,u_k\in  U_{k}(x_k^{\ell+1})\Big\},
    \end{aligned}
\end{equation}
we compute $\hat u_{k}$ via the minimization
\begin{equation}
    \label{eq:hat_u_min_k}
    \hat u_k\in \arg\min_{u_k\in \hat U_{k,i}(x_k^{\ell+1},\mu_k^\ell)}\Big[g_k(x_k^{\ell+1},u_k)+J_{k+1,\pi^{\ell}}\big(f_k(x_k^{\ell+1},u_k)\big)\Big],
\end{equation}
with preference given to $\mu_k^\ell(x_k^{\ell+1})$ in case of a tie, and we set $u_{k,i}^{\ell+1}=\hat u_{k,i}$. Finally, after obtaining $u_{k,i}^{\ell+1}$, $i=1,\dots,n-1$, we construct the set $\hat U_{k,n}(x_k^{\ell+1},\mu_k^\ell)$ as
\begin{equation}
    \label{eq:hat_u_n}
    \begin{aligned}
        \hat U_{k,n}(x_k^{\ell+1},\mu_k^\ell)=&\Big\{u_k\,\big|\,u_{k,j}=u_{k,j}^{\ell+1},\,j=1,\dots,n-1,\\
        &\qquad \;\,u_{k,n}=\mu_{k,n}^\ell(x_k^{\ell+1})+m\rho_n, m\in Z,\,u_k\in  U_{k}(x_k^{\ell+1})\Big\},
    \end{aligned}
\end{equation}
compute $\hat u_{k}$ via the minimization
\begin{equation}
    \label{eq:hat_u_min_n}
    \hat u_k\in \arg\min_{u_k\in \hat U_{k,n}(x_k^{\ell+1},\mu_k^\ell)}\Big[g_k(x_k^{\ell+1},u_k)+J_{k+1,\pi^{\ell}}\big(f_k(x_k^{\ell+1},u_k)\big)\Big],
\end{equation}
with preference given to $\mu_k^\ell(x_k^{\ell+1})$ in case of a tie, and set $u_{k,n}^{\ell+1}=\hat u_{k,n}$. 

It can be seen that the sets $\hat{U}_{k,i}(x_k^{\ell+1},\mu_k^\ell)$, $i=1,\dots,n$, are finite (since $Z$ is bounded). As a result, the minimizations in Eqs.~\eqref{eq:hat_u_min_1}, \eqref{eq:hat_u_min_k}, and \eqref{eq:hat_u_min_n} involve computation and comparison of finitely many values, which are well-suited for parallel computation. 

As mentioned earlier, multiagent on-line PI can be viewed as a special case of the simplified on-line PI discussed earlier. To see this, we set
$$\overline U_k(x_k^{\ell+1},\mu_k^\ell)=\cup_{i=1}^n \hat U_{k,i}(x_k^{\ell+1},\mu_k^\ell).$$
We have that the control $u_k^{\ell+1}$ computed via Eqs.~\eqref{eq:hat_u_1}-\eqref{eq:hat_u_min_n} attains the minimum in
$$\min_{u_k\in \overline U_k(x_k^{\ell+1},\mu_k^\ell)}\Big[g_k(x_k^{\ell+1},u_k)+J_{k+1,\pi^{\ell}}\big(f_k(x_k^{\ell+1},u_k)\big)\Big],$$
cf. Eq.~\eqref{eq:ssl_im_tk}, and equals $\mu_k^\ell(x_k^{\ell+1})$ in case of a tie. Moreover, we have $\mu_k^\ell(x_k^{\ell+1})\in \hat U_{k,1}(x_k^{\ell+1},\mu_k^\ell)\subset \overline U_k(x_k^{\ell+1},\mu_k^\ell)$. As a result, Prop.~\ref{prop:ssl_property} applies to the multiagent on-line PI algorithm.

On the other hand, multiagent on-line PI results in tremendous computational savings, especially when discretization is involved. Suppose that the discretization resolutions $\rho_i$, $i=1,\dots,n$, are selected so that each set $\hat U_{k,i}(x_k^{\ell+1},\mu_k^\ell)$ consists of $m$ components. Then the minimizations in Eqs.~\eqref{eq:hat_u_min_1}, \eqref{eq:hat_u_min_k}, and \eqref{eq:hat_u_min_n} compute and compare a total of $m\cdot n$ numbers. By contrast, if we construct the set $\overline U_k(x_k^{\ell+1},\mu_k^\ell)$ by using the same resolutions $\rho_i$, $i=1,\dots,n$ along different coordinates and discretizing the set $U_k(x_k^{\ell+1})$ uniformly, the resulting set $\overline U_k(x_k^{\ell+1},\mu_k^\ell)$ contains $m^n$ elements, which scales poorly with the dimension of the control $u_k$.

The preceding construction of the sets $\overline U_k(x_k^{\ell+1},\mu_k^\ell)$ applies to the case where the components of $u_k$ are continuous, and for this reason we have used discretization. A similar procedure applies in the case where the sets $U_k(x_k)$ are finite, or involve both continuous and discrete components. Then multiagent on-line PI is still a special case of the simplified on-line PI algorithm, and Prop.~\ref{prop:ssl_property} holds in this case as well.

\subsection{Stochastic Variants of On-Line Policy Iteration}
Up to this point, we have described our on-line PI algorithm and its simplified variant as deterministic algorithms. In particular, the generator $\mathcal G$ is defined as a function that maps a complete feasible trajectory to a policy. Similarly, in simplified on-line PI, for $k=0,\dots,N-1$, the set-valued function $\overline U_k(\cdot,\cdot)$ maps $(x_k,\mu_k)$ to a small subset of $U_k(x_k)$. 

In practice, assuming that either the generator $\mathcal{G}$ or the sets $\overline U_k(x_k,\mu_k)$ are deterministically chosen may be restrictive. For example, given a complete feasible trajectory, one may wish to use random sampling and stochastic training algorithms to obtain a policy $\pi$, in which case the generator $\mathcal{G}$ is defined implicitly through these procedures. Similarly, one may construct the set $\overline U_k(x_k,\mu_k)$ through stochastic sampling schemes. In this section we show how our on-line PI algorithm and its simplified variant extend to the case where $\mathcal{G}$ and/or $\overline U_k(x_k,\mu_k)$ are chosen stochastically.

In particular, given a feasible trajectory $\{x_0,u_0,x_1,\dots,u_{N-1},x_N\}$, a stochastic generator $\tilde{\mathcal{G}}$ defines a probability distribution over the set of all policies $\Pi$.\footnote{To bypass measure-theoretic issues, we assume that this probability distribution is discrete.} Then a policy $\pi$ can be sampled from this distribution, which we write as
\begin{equation}
    \label{eq:generator_def_st}
    \pi\sim \tilde{\mathcal{G}}(x_0,u_0,x_1,\dots,u_{N-1},x_N).
\end{equation}
We say that the stochastic generator $\tilde{\mathcal{G}}$ is \emph{consistent} if $\pi$ satisfies
$$\mu_k(x_k)=u_k,\qquad k=0,\dots,N-1;$$
cf. Eq.~\eqref{eq:consistency}. We have the following policy improvement property, which is just Prop.~\ref{prop:sl_property}(a) adapted to stochastic on-line PI. Its proof can be obtained by using the arguments of the proof of Prop.~\ref{prop:sl_property}(a).

\begin{proposition}[Convergence of Stochastic On-Line PI]\label{prop:sl_st_converg}
    Consider the sequence of policies $\pi^0,\pi^1,\dots$, computed via the on-line PI algorithm \eqref{eq:pi_im_t1}-\eqref{eq:sl_g} with Eq.~\eqref{eq:sl_g} replaced by
    $$\pi^{\ell+1}\sim \tilde{\mathcal{G}}(x_0,u_0^{\ell+1},x_1^{\ell+1},\dots,u_{N-1}^{\ell+1},x_N^{\ell+1}).$$
    Suppose that the generator $\tilde{\mathcal{G}}$ is consistent. Then the generated policies are improving at $x_0$ in the sense that we have
    $$J_{\pi^{\ell+1}}(x_0)\leq J_{\pi^{\ell}}(x_0), \qquad \hbox{for all $\ell$}.$$   
\end{proposition}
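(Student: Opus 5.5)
The plan is to reduce the claim to the argument of Prop.~\ref{prop:sl_property}(a), applied one realization at a time. The key point is that the inequality is asserted for every sampled sequence of policies, not in expectation. So we fix an arbitrary sample path $\pi^0,\pi^1,\dots$ of the stochastic algorithm and an arbitrary iteration index $\ell$. Conditionally on $\pi^\ell$, the trajectory $\{x_0,u_0^{\ell+1},x_1^{\ell+1},\dots,u_{N-1}^{\ell+1},x_N^{\ell+1}\}$ is produced deterministically by the minimizations \eqref{eq:pi_im_t1}--\eqref{eq:sl_xk}, including the tie-breaking rule. Only the final step, the draw from $\tilde{\mathcal G}$, is random. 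Since the distribution is discrete, the realized $\pi^{\ell+1}$ lies in its support, and consistency of $\tilde{\mathcal G}$ gives $\mu_k^{\ell+1}(x_k^{\ell+1})=u_k^{\ell+1}$ for $k=0,\dots,N-1$. This is exactly the property of the deterministic generator that was used in the proof of Prop.~\ref{prop:sl_property}.

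Given this, the key steps follow the earlier proof. First, we record the one-step decomposition of $J_{k,\pi^{\ell+1}}$ along the trajectory $\{x_k^{\ell+1}\}$, as in \eqref{eq:pe_data}. Here we use that $\pi^{\ell+1}$ applied from $x_k^{\ell+1}$ selects $u_k^{\ell+1}$ and therefore moves to $x_{k+1}^{\ell+1}$. Second, we run the backward induction \eqref{eq:prop1_ineq}, proving $J_{k,\pi^{\ell+1}}(x_k^{\ell+1})\le J_{k,\pi^\ell}(x_k^{\ell+1})$ for $k=N,N-1,\dots,1$. The base case $k=N$ is trivial because both sides equal $g_N(x_N^{\ell+1})$. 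The inductive step uses the chain \eqref{eq:proof_ineq}, in this order:
\begin{itemize}
\item[(i)] consistency, to write $J_{k,\pi^{\ell+1}}(x_k^{\ell+1})=g_k(x_k^{\ell+1},u_k^{\ell+1})+J_{k+1,\pi^{\ell+1}}(x_{k+1}^{\ell+1})$;
\item[(ii)] the induction hypothesis, to replace $J_{k+1,\pi^{\ell+1}}$ by $J_{k+1,\pi^\ell}$ at $x_{k+1}^{\ell+1}$;
\item[(iii)] the definition of $u_k^{\ell+1}$ as a minimizer, to compare with the control $\mu_k^\ell(x_k^{\ell+1})$;
\item[(iv)] the DP equation for $\pi^\ell$, to identify the result with $J_{k,\pi^\ell}(x_k^{\ell+1})$.
\end{itemize}
Third, we apply the same chain at $k=0$ with the fixed state $x_0$, which yields $J_{\pi^{\ell+1}}(x_0)\le J_{\pi^\ell}(x_0)$.

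None of these steps needs any independence or measurability structure beyond the discreteness assumption. So we conclude that the inequality holds surely, for every realization and every $\ell$.

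The only step needing care is the first one: checking that randomness does not enter the trajectory computation and that consistency holds for every sampled policy. Consistency must hold for each $\pi$ in the support of $\tilde{\mathcal G}$, not merely on average; I will read the definition in that sense. Once that is settled, the remainder is a verbatim repetition of the proof of Prop.~\ref{prop:sl_property}(a).

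Two consequences are worth noting. Because the bound is pathwise, it immediately implies monotonicity in expectation as well. Also, since $J_{\pi^\ell}(x_0)\ge J^*(x_0)>-\infty$, each sample path of $\{J_{\pi^\ell}(x_0)\}$ converges.
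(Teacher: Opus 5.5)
Your proposal is correct and follows the paper's route. The paper proves this proposition only by pointing back to the argument of Prop.~\ref{prop:sl_property}(a) applied realization by realization, and the points you isolate are exactly the ones that argument needs: the trajectory is computed deterministically given $\pi^\ell$, and consistency holds for every sampled policy in the support of $\tilde{\mathcal G}$. Your steps (i)--(iv) faithfully reproduce the chain \eqref{eq:proof_ineq}.
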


We note, however, that Prop.~\ref{prop:sl_property}(b) does not extend to the stochastic on-line PI. To see this, suppose that for some $\bar \ell$, we have $J_{\pi^{\bar\ell+1}}(x_0)= J_{\pi^{\bar\ell}}(x_0)$. Applying the arguments in the proof of Prop.~\ref{prop:sl_property}(b) yields that 
$$u_{k-1}^{\bar\ell+1}= u_{k-1}^{\bar\ell},\quad x_k^{\bar\ell+1}= x_k^{\bar\ell},\qquad k=1,\dots,N.$$
However, since a stochastic generator $\tilde{\mathcal{G}}$ is applied, we do not have $\pi^{\bar \ell+1}=\pi^{\bar \ell}$ in general. Still, this can potentially be beneficial as it helps the algorithm step away from a policy $\pi^{\bar \ell}$ that may  be poor.

Similarly, in the simplified on-line PI algorithm \eqref{eq:ssl_im_t1}-\eqref{eq:ssl_g}, the sets $\overline{U}_0(x_0,\mu_0^{\ell})$, $\overline{U}_k(x_k^{\ell+1},\mu_k^{\ell})$, $k=1,\dots,N-1$, can be replaced by sets $\tilde{U}_0(x_0,\mu_0^{\ell})$, $\tilde{U}_k(x_k^{\ell+1},\mu_k^{\ell})$, $k=1,\dots,N-1$, which are sampled from some discrete distributions that depend on $(x_0,\mu_0^{\ell})$ and $(x_k^{\ell+1},\mu_k^{\ell})$, respectively. As in the simplified PI algorithm, we require that these sets satisfy the condition
\begin{equation}
    \label{eq:ssl_st_set}
    \mu_0^\ell(x_0)\in \tilde U_0(x_0,\mu_0^\ell),\quad \mu_k^\ell(x_k^{\ell+1})\in \tilde U_k(x_k^{\ell+1},\mu_k^\ell),\qquad k=1,\dots,N-1,
\end{equation}
cf. Eq.~\eqref{eq:ssl_set}. Then the following result is obtained.
\begin{proposition}[Convergence of Stochastic Simplified On-Line PI]\label{prop:ssl_st_converg}
Consider the sequence of policies $\pi^0,\pi^1,\dots$, computed via the simplified on-line PI algorithm \eqref{eq:ssl_im_t1}-\eqref{eq:ssl_g} with the sets $\overline U_0(x_0,\mu_0^\ell)$, $\overline U_k(x_k^{\ell+1},\mu_k^\ell)$, $k=1,\dots,N-1$, replaced by the randomly generated sets $\tilde{U}_0(x_0,\mu_0^{\ell})$, $\tilde{U}_k(x_k^{\ell+1},\mu_k^{\ell})$, $k=1,\dots,N-1$. Assume that these sets satisfy condition \eqref{eq:ssl_st_set}. Suppose that either
    \begin{itemize}
        \item[(a)] The generator function $\mathcal G$ is consistent, or
        \item[(b)] Eq.~\eqref{eq:ssl_g} is replaced by
    $$\pi^{\ell+1}\sim \tilde{\mathcal{G}}(x_0,u_0^{\ell+1},x_1^{\ell+1},\dots,u_{N-1}^{\ell+1},x_N^{\ell+1}),$$
    and the generator $\tilde{\mathcal{G}}$ is consistent.
    \end{itemize}
    Then the generated policies are improving at $x_0$ in the sense that we have
    $$J_{\pi^{\ell+1}}(x_0)\leq J_{\pi^{\ell}}(x_0), \qquad \hbox{for all $\ell$}.$$    
\end{proposition}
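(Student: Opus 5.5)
The argument is a pathwise one. We fix an arbitrary realization of all the random objects and rerun the backward induction from the proof of Prop.~\ref{prop:sl_property}(a) on that realization. Fix $\ell$ and condition on the policy $\pi^\ell$ (itself a random outcome of earlier iterations) and on the sampled sets $\tilde U_0(x_0,\mu_0^\ell)$, $\tilde U_k(x_k^{\ell+1},\mu_k^\ell)$. The trajectory $\{x_0,u_0^{\ell+1},x_1^{\ell+1},\dots,x_N^{\ell+1}\}$ is then determined. In case (b) we also condition on the sampled policy $\pi^{\ell+1}$. Since every distribution is discrete, each realized outcome has positive probability, and it suffices to show $J_{\pi^{\ell+1}}(x_0)\le J_{\pi^\ell}(x_0)$ for every such outcome.

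On a fixed outcome, consistency is the only property of the generator we use. In case (a) it is consistency of $\mathcal G$. In case (b) it is consistency of $\tilde{\mathcal G}$, which by definition holds for every policy in the support. Either way, $\mu_k^{\ell+1}(x_k^{\ell+1})=u_k^{\ell+1}$ for $k=0,\dots,N-1$. We then prove by backward induction on $k$ that
$$J_{k,\pi^{\ell+1}}(x_k^{\ell+1})\le J_{k,\pi^\ell}(x_k^{\ell+1}),\qquad k=N,\dots,0,$$
with $x_0^{\ell+1}=x_0$. The case $k=N$ is trivial because both sides equal $g_N(x_N^{\ell+1})$. For the inductive step we reproduce the chain \eqref{eq:proof_ineq}. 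First, consistency gives
$$J_{k,\pi^{\ell+1}}(x_k^{\ell+1})=g_k(x_k^{\ell+1},u_k^{\ell+1})+J_{k+1,\pi^{\ell+1}}(x_{k+1}^{\ell+1}).$$
The induction hypothesis then bounds this by $g_k(x_k^{\ell+1},u_k^{\ell+1})+J_{k+1,\pi^\ell}(x_{k+1}^{\ell+1})$. This quantity is the minimum of $g_k+J_{k+1,\pi^\ell}\circ f_k$ over $\tilde U_k(x_k^{\ell+1},\mu_k^\ell)$.

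This is the only step where the argument departs from Prop.~\ref{prop:sl_property}, and the one place needing care. The minimum is now taken over the sampled subset rather than over $U_k(x_k^{\ell+1})$. To compare it with $J_{k,\pi^\ell}(x_k^{\ell+1})$, we use condition \eqref{eq:ssl_st_set}, which states that $\mu_k^\ell(x_k^{\ell+1})$ lies in the sampled set. Evaluating the objective at that feasible point bounds the minimum by
$$g_k\big(x_k^{\ell+1},\mu_k^\ell(x_k^{\ell+1})\big)+J_{k+1,\pi^\ell}\Big(f_k\big(x_k^{\ell+1},\mu_k^\ell(x_k^{\ell+1})\big)\Big),$$
which equals $J_{k,\pi^\ell}(x_k^{\ell+1})$ by the DP equation for $\pi^\ell$. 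Note that the tie-breaking rule plays no role here; it matters only for part (b) of the earlier propositions. We also only need \eqref{eq:ssl_st_set} to hold for each realized sample, and the statement assumes exactly this.

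Taking $k=0$ gives $J_{\pi^{\ell+1}}(x_0)\le J_{\pi^\ell}(x_0)$ for every realization, so the inequality holds surely and in particular with probability one. We would remark, as after Prop.~\ref{prop:sl_st_converg}, that the monotone sequence is bounded below by $J^*(x_0)$ and therefore converges pathwise. The analogue of part (b) fails in case (b), and also in case (a), because resampling the sets can change $u_k^{\ell+1}$ even when the costs are equal.
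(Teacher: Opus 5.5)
Your proposal is correct and takes the same route the paper indicates: rerun the backward induction of Prop.~\ref{prop:sl_property}(a) on each realization. Consistency of $\mathcal G$ or $\tilde{\mathcal G}$ gives the first equality, and condition \eqref{eq:ssl_st_set} replaces the minimization over $U_k(x_k^{\ell+1})$ by evaluation at $\mu_k^\ell(x_k^{\ell+1})\in\tilde U_k(x_k^{\ell+1},\mu_k^\ell)$, exactly as you do. Your observations that tie-breaking plays no role and that the inequality holds surely, not just almost surely, are also accurate.
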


The proof can be obtained by adapting the arguments in the proof of Prop.~\ref{prop:sl_property}(a). Regarding Prop.~\ref{prop:sl_property}(b), it does not extend to the stochastic version of simplified on-line PI, even if the generator function $\mathcal G$ is used. To see this, note that if $J_{\pi^{\bar\ell+1}}(x_0)= J_{\pi^{\bar\ell}}(x_0)$ for some $\bar \ell$, we can show that $\pi^{\bar\ell+1}=\pi^{\bar\ell}$ and the trajectory computed at the $(\bar\ell+1)$th iteration is the same as that at the $\bar\ell$th iteration. However, the set $\tilde{U}_0(x_0,\mu_0^{\bar \ell+1})$ computed at the first step in the $(\bar \ell+2)$th iteration may be different from $\tilde{U}_0(x_0,\mu_0^{\bar \ell})$, despite the fact that $\mu_0^{\bar \ell+1}=\mu_0^{\bar \ell}$.

For an example of a stochastic mechanism for generating a set $\tilde{U}_k(x_k^{\ell+1},\mu_k^{\ell})$, consider the case where $U_k(x_k)$ is a subset of an $n$-dimensional Euclidean space. Instead of computing sets $\hat{U}_{k,i}(x_k^{\ell+1},\mu_k^\ell)$, $i=1,\dots,n$, sequentially according to a fixed order via Eqs.~\eqref{eq:hat_u_1}-\eqref{eq:hat_u_min_n}, and defining the set $\overline{U}_k(x_k^{\ell+1},\mu_k^{\ell})$ as $\cup_{i=1}^n\hat{U}_{k,i}(x_k^{\ell+1},\mu_k^\ell)$, we may first generate a random permutation function $\sigma:\{1,\dots,n\}\mapsto\{1,\dots,n\}$, compute the sets $\hat{U}_{k,i}(x_k^{\ell+1},\mu_k^\ell)$, $i=1,\dots,n$, according to the order $\sigma(i)$, and define the set $\tilde{U}_k(x_k^{\ell+1},\mu_k^{\ell})$ as the union of these sets. Similar to the use of a stochastic generator $\tilde{\mathcal{G}}$, this stochastic mechanism may be helpful in some contexts. In fact, a computational study given in \cite{emanuelsson2023multiagent} has shown the effectiveness of this mechanism when applied in rollout, i.e., one iteration of PI.

\section{Computational Studies}\label{sec:compute}

Two principal classes of problems that are well suited for our algorithm are  discrete optimization (including integer programming) and path planning involving tasks that are solved repetitively, starting from the same initial state. In this section, we apply our on-line PI algorithm and its stochastic multiagent variant to representative problems from these two categories. The first is a multidimensional assignment problem, the second is path planning for robotic tasks involving obstacles, the third is path planning and coordination for robotic tasks with multiple drones, {and the fourth is path planning for a robot arm}. We show that within these contexts, the generator $\mathcal G$ or $\tilde{\mathcal{G}}$ can be properly defined and our on-line PI algorithm is very effective. {Finally, we also compare our method with the scheme introduced in \cite{rosolia2017learning} at the end of this section.} The code can be found at \url{https://github.com/yuchaotaigu/On-Line-Policy-Iteration}.

\subsection{Multidimensional Assignment}

Our first computational experiments involve small integer programming problems, and are provided here to illustrate our methodology within a simple setting where the optimal solution can be calculated exactly. We address multidimensional assignment, an important class of discrete optimization problems that arise in a variety of application contexts; see, e.g., \cite{kammerdiner2022multidimensional,musunuru2024approximate}, and the references cited there. 

Mathematically, the problem involves a graph consisting of $(N+1)$ subsets of nodes $\mathcal N_0,\dots,\mathcal N_{N}$, which are referred to as \emph{layers}. Each layer consists of $m$ nodes; see Fig.~\ref{fig:mda_eg}. The arcs of the graph are of the form $(i,j)$, where $i$ is a node in a layer $\mathcal N_k$, $k=0,1,\dots,N-1$, and $j$ is a node in the corresponding next layer $\mathcal{N}_{k+1}$. We consider subsets of $(N+1)$ nodes, referred to as \emph{groupings}, which consist of a single node from every layer. For each grouping, there is an associated cost. A partition of the set of nodes into $m$ disjoint groupings is called an $(N+1)$-dimensional assignment, and its cost is the sum of the costs of its $m$ groupings. The problem is to find an $(N+1)$-dimensional assignment of minimum cost. We assume that the graph is dense in the sense that for every node $i$ in layer $\mathcal N_k$ and every node $j$ in layer $\mathcal N_{k+1}$, $(i,j)$ is an arc.\footnote{This is without loss of generality because if some arcs are absent in a given graph, we can add these arcs to the graph and assign high costs to the groupings that involve these arcs.} The difficulty here is that the cost of a grouping does not decompose into a sum of its $N$ arc costs, so the problem cannot be solved by solving $N$ decoupled $2$-dimensional assignment problems.

\begin{figure}[t]
    \centering
    \includegraphics[width=0.7\linewidth]{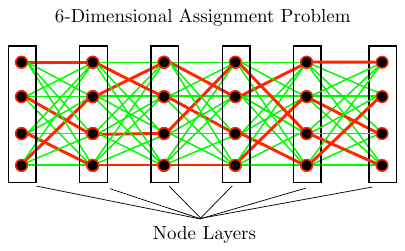}
    \caption{Illustration of the graph of an $(N+1)$-dimensional assignment problem (here $N=5$). There are $N+1$ node layers each consisting of $m$ nodes (here $m=4$). Each grouping consists of $(N+1)$ nodes, one from each layer, and $N$ corresponding arcs. An $(N+1)$-dimensional assignment consists of $m$ node-disjoint groupings, where each node belongs to one and only one grouping (illustrated in the figure with thick red lines). For each grouping, there is an associated cost that depends on the $N$-tuple of arcs comprising the grouping. The cost of an $(N+1)$-dimensional assignment is the sum of the costs of its $m$ groupings.}
    \label{fig:mda_eg}
\end{figure}

The problem can be formulated as an $N$-stage optimal control problem where at the $k$th stage, we select the arcs of a two-dimensional assignment
$$u_k=\{(i_n,j_n)\,|\,n=1,\dots,m\},\quad k=0,\dots,N-1,$$
which connect the nodes $i_n$ of layer $\mathcal{N}_k$ and the nodes $j_n$ of layer $\mathcal{N}_{k+1}$ on a one-to-one basis. The initial state $x_0$ is an artificial state, the next state $x_1$ after applying $u_0$ is $x_1=u_0$, and the state equation is 
$$x_{k+1}=(x_k,u_k),\qquad k=1,\dots,N-1.$$
Thus a state $x_k$ consists of the first $k$ arcs of $m$ groupings. The control constraint set $U_k(x_k)$ is independent of $x_k$ and is the set of the two-dimensional arc assignments between layers $\mathcal N_k$ and $\mathcal N_{k+1}$. There is a terminal cost $g_N(x_N)$ (the sum of the costs of the $m$ groupings in the terminal state $x_N$), while all the other stage costs are $0$. Further discussion can be found in \cite[Example~2.2.1]{bertsekas2025course}, which addresses the solution of multiassignment problems with alternative RL methods.

For a given trajectory $x_0,u_0^\ell,x_1^\ell,\dots,u_{N-1}^\ell, x_N^\ell$, a generator defines a policy $\pi^\ell=\{\mu_0^\ell,\dots,\mu_{N-1}^\ell\}$ such that $\mu_k^\ell(x_k)=u_k^\ell$, i.e., the policy adds the arcs of the two-dimensional assignment $u_k^\ell$ to the partial grouping/trajectory $x_k$. It can be seen that the generator is consistent. For  this generator, and given the current trajectory, on-line PI constructs a new trajectory by solving sequentially $N$ two-dimensional assignment problems: first a problem involving nodes of layers $\mathcal{N}_0$ and $\mathcal{N}_1$, then a problem involving nodes of layers $\mathcal{N}_1$ and $\mathcal{N}_2$, etc. The cost of an arc $(i,j)$ in each two-dimensional assignment problem is provided by the cost of the grouping formed by $(i,j)$, the arcs of the new trajectory leading to $i$, and the arcs of the current trajectory starting from $j$; see \cite[Example~2.2.1]{bertsekas2025course} for further discussion.

Figure~\ref{fig:mda_gap} demonstrates the computational results of on-line PI applied to some small-scale problems, where the exact solution can be computed via brute force. On-line PI takes about $2$ milliseconds, while exact optimization by brute-force calculation takes up to $3$ seconds. In all these examples, the initial trajectory $x_0,u_0^0,x_1^0,\dots,u_{N-1}^0,x_N^0$ was  generated randomly, and  we have defined the initial policy $\pi^0=\{\mu_0^0,\dots,\mu_{N-1}^0\}$ via $\mu_k^0(x_k)\equiv u_k^0$ for all $k$. Generally, in integer programming problems, one can often use various heuristics to generate an initial trajectory. One may  then try to improve this trajectory by using rollout techniques, similar to the methodology of this paper; see the book \cite{bertsekas2025course} for further discussion. We have not assumed any special structure in our experiments.

Figure~\ref{fig:mda_imp} shows the results of on-line PI applied to larger problems with $N$ ranging from $8$ to $12$ and $m$ ranging from $3$ to $6$, where exact optimization by brute-force calculation is not feasible.\footnote{To provide a sense of problem size, consider $N=12$ and $m=6$. Suppose the cost of each grouping is stored using a single $8$-bit byte. Then we need over $100$ GB to store the cost values of all groupings in the problem.} On-line PI generates cost improving solutions in all these cases, with computation time less than $70$ milliseconds in the worst case.

\begin{figure}[t]
    \centering
    \includegraphics[width=\linewidth]{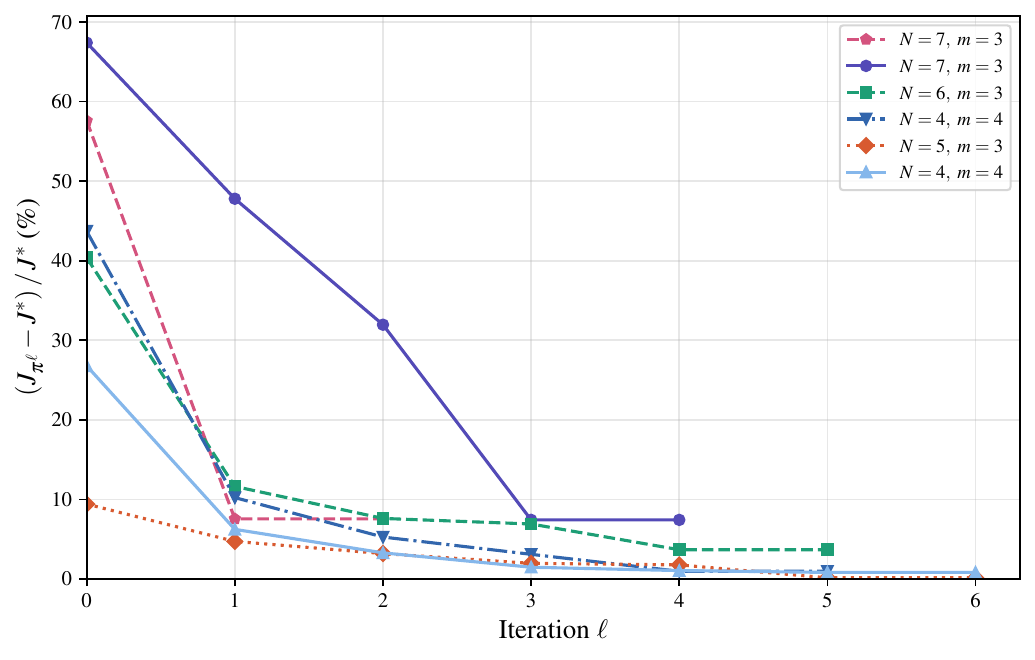}
    \caption{Optimality gap $(J_{\pi^\ell}-J^*)/J^*$ $(\%)$ for the on-line PI algorithm applied to some multidimensional assignment problems with randomly generated grouping costs. It can be seen that on-line PI converges monotonically from $2$ to $6$ iterations and significantly reduces the optimality gap in all of these examples.}
    \label{fig:mda_gap}
\end{figure}

\begin{figure}[ht]
    \centering
    \includegraphics[width=\linewidth]{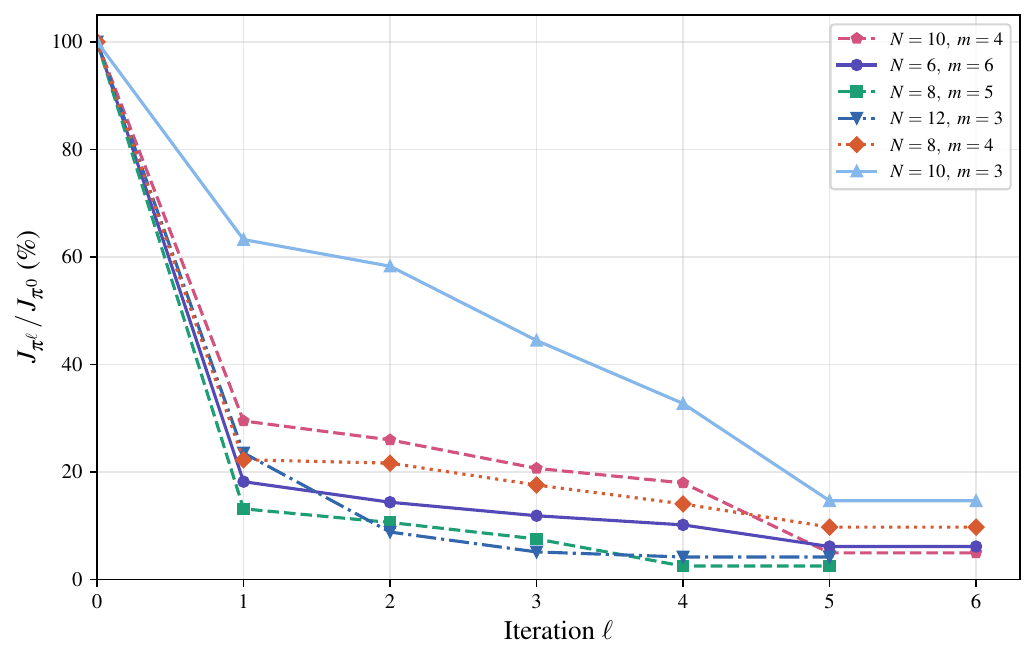}
    \caption{Cost improvement $J_{\pi^\ell}/J_{\pi^0}$ $(\%)$ for the on-line PI algorithm applied to some multidimensional assignment problems with randomly generated grouping costs, where the exact solution is intractable via brute force. The on-line PI algorithm reduces the costs in all tested cases.}
    \label{fig:mda_imp}
\end{figure}

\subsection{Planning the Path of a Drone}\label{sec:4_2}
We consider a path planning problem for a drone in a $3$D environment. The states $x_k$ are $6$D vectors, consisting of positions and velocities in $3$ dimensions. The controls $u_k$ are $3$D vectors, representing acceleration in $3$ dimensions. The control constraint sets are state-independent, i.e., $U_k(x_k)\equiv U_k$ for all $x_k$ and $k$, and are characterized by upper and lower bounds on the accelerations. The state equation $f_k$ is a double integrator, and the stage cost $g_k$ penalizes control effort and introduces obstacle repulsion, as used in~\cite{khatib1986real}. The terminal cost $g_N$ is defined to penalize deviations from the goal state. The number of stages $N$ is $40$, and the time between two stages $k$ and $k+1$ is $0.25$ seconds, which is a standard time limit for computation of control in real-time operation. The overall objective is to safely reach the goal region within $N$ stages while avoiding obstacles. Further details on problem data can be found in the provided code.

We have applied our multiagent on-line PI algorithm to this problem, computing acceleration in one direction at a time (so in our terminology there are $3$ `agents'). In particular, our initial policy $\pi^0$ is obtained by using the proximal policy optimization algorithm (PPO) \cite{schulman2017proximal}, with the reward defined as the negative of the stage cost. The critical challenge is to construct a generator function $\mathcal G$ or a stochastic generator $\tilde{\mathcal{G}}$ while ensuring its consistency. In this example, we construct a stochastic generator $\tilde{\mathcal{G}}$ through the following three steps. At iteration $(\ell+1)$, given the feasible trajectory $x_0,u_0^{\ell+1},x_1^{\ell+1},\dots,u_{N-1}^{\ell+1},x_N^{\ell+1}$:
\begin{itemize}
    \item[1)] For every $x_k^{\ell+1}$, $k=1,\dots,N-1$, we sample $q_k$ states $\bar x_k^{j}$, $j=1,\dots,q_k$, according to a Gaussian distribution with mean equal to $x_k^{\ell+1}$ ;
    \item[2)] For every sampled state $\bar x_k^{j}$, $j=1,\dots,q_k$, and $k=1,\dots,N-1$, we compute a `good control' $\bar u_k^{j}$, $j=1,\dots,q_k$, and $k=1,\dots,N-1$. We thus obtain a data set 
    \begin{equation}
    \label{eq:train_set}
        \big\{(\bar x_k^{j},\bar u_k^{j})\,\big|\,k=1,\dots,N-1,\,j=1,\dots,q_k\big\};
    \end{equation}
    \item[3)] Finally, we use the  data set \eqref{eq:train_set} and  supervised learning to train a neural network $\mathcal F_\theta$ parametrized by $\theta$.  
\end{itemize}

These three steps collectively define the generator $\tilde{\mathcal{G}}$, whereby using the trajectory $x_0,u_0^{\ell+1},x_1^{\ell+1},\dots,u_{N-1}^{\ell+1},x_N^{\ell+1}$ as a starting point, we obtain the trained neural network $\mathcal F_\theta$ to define the policy $\pi^{\ell+1}$. In what follows, we elaborate on the computation of the `good control' $\bar u_k^{j}$ in step 2), and we show the consistency property of the stochastic generator $\tilde{\mathcal{G}}$.

For the computation of $\bar u_k^{j}$ in step 2), we first note that because the control constraint set  is state-independent, we can `replay' the control of the `current' trajectory. In particular, given $x_0,u_0^{\ell+1},x_1^{\ell+1},\dots,u_{N-1}^{\ell+1},x_N^{\ell+1}$, we define an intermediate policy $\bar \pi^{\ell+1}=\{\bar\mu_0^{\ell+1},\dots,\bar\mu_{N-1}^{\ell+1}\}$ such that $\bar \mu_k^{\ell+1}(x_k)\equiv u_k^{\ell+1}$. For every sampled state $\bar x_k^j$, as per step 1) above, we sample $n$ controls $\bar u_k^{j,s}$, $s=1,\dots,n$, according to a Gaussian distribution with mean equal to $u_k^{\ell+1}$ while ensuring that the control constraints are satisfied. We use the intermediate policy $\bar \pi^{\ell+1}$ to evaluate these randomly selected controls $\bar u_k^{j,s}$, and obtain the `good control' $\bar u_k^j$ via
$$\bar u_k^j\in \arg\min_{s=1,\dots,n}\Big[g_k(\bar x_k^{j},\bar u_k^{j,s})+J_{k+1,\bar \pi^{\ell+1}}\big(f_k(\bar x_k^{j},\bar u_k^{j,s})\big)\Big].$$
This is a straightforward calculation that can be carried out easily.

To show that the stochastic generator $\tilde{\mathcal{G}}$ is consistent, we note that the neural network $\mathcal F_\theta$ takes as an input the pair $(x_k-x_k^{\ell+1},k/N)$ and outputs the deviation from the rollout control $u_k^{\ell+1}$. Upon completion of training, the functions $\mu_k^{\ell+1}$, $k=0,\dots,N-1$, are defined as\footnote{Actually, the policy $\mu_k^{\ell+1}(x_k)$ also involves a saturation function to ensure the control $\mu_k^{\ell+1}(x_k)$ is feasible, i.e., satisfying the acceleration limits. For simplicity, we do not include the saturation function explicitly in defining $\mu_k^{\ell+1}$, as it does not affect our analysis of the consistency property of the generator.}
$$
\mu_k^{\ell+1}(x_k)
= u_k^{\ell+1}
+\mathcal E_\theta\!\left(x_k-x_k^{\ell+1},k/N\right),
$$
where
$$
\mathcal E_\theta\!\left(x_k-x_k^{\ell+1},k/N\right)
=\mathcal F_\theta\!\left(x_k-x_k^{\ell+1},k/N\right)
-\mathcal F_\theta\!\left(0,k/N\right).
$$
It can be seen that when $x_k=x_k^{\ell+1}$, we have $\mu_k^{\ell+1}(x_k)
= u_k^{\ell+1}$, ensuring consistency of the generator $\tilde{\mathcal{G}}$. Further details on the structure of $\mathcal F_\theta$ and training can be found in the supplementary code.

We evaluate the method in three scenarios involving different initial conditions and obstacle layouts. Scenario 1 has 12 obstacles, scenario 2 has 16 obstacles, and scenario 3 has 24 obstacles; see Figs.~\ref{fig:env1_traj}, \ref{fig:env2_traj}, and \ref{fig:env3_traj}. In all scenarios, the green box denotes the goal region, while gray boxes and red spheres denote obstacles. The drone icon marks the initial condition, and the trajectories shown are the ones generated under policies $\pi^0$, $\pi^4$, and $\pi^{12}$. It can be seen from Figs.~\ref{fig:env1_cost}, \ref{fig:env2_cost} and \ref{fig:env3_cost} that multiagent on-line PI generates significantly improved sequences of policies in all scenarios. Tables~\ref{tab:timing_both_envs} and \ref{tab:timing_parallel_both_envs} show the computational metrics when the individual control components are calculated without and with parallel processors, respectively. The metrics include the computation time per stage (mean$\pm$standard deviation, in milliseconds, with header `Stage'), and the time for data generation and  training of each new policy (in seconds, with header `$\mathcal F_\theta$').\footnote{All experiments were conducted on a machine equipped with an Apple M4 Max chip (14-core CPU, 36 GB unified memory).} 
It can be seen that even without parallel computation, the per-stage time is well below the `standard' $0.25$ second limit. Moreover, parallel processors significantly reduce the computational time.
\begin{figure}[ht!]
    \centering
    \begin{subfigure}[t]{\columnwidth}
        \centering
        \includegraphics[width=0.7\textwidth]{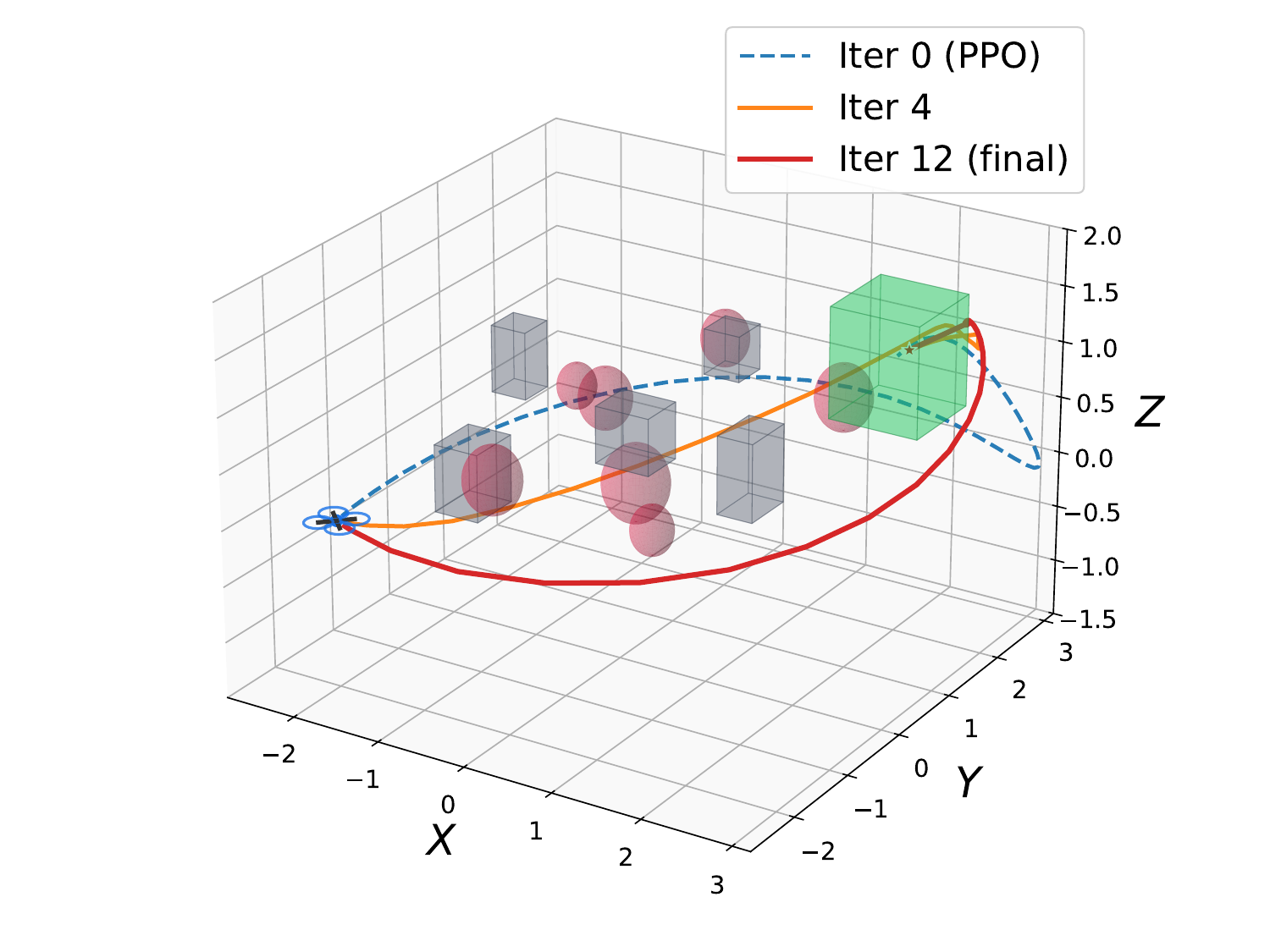}
        \caption{Scenario 1}
        \label{fig:env1_traj}
    \end{subfigure}
    \vspace{0.5cm}
    \begin{subfigure}[t]{\columnwidth}
        \centering
        \includegraphics[width=0.7\textwidth]{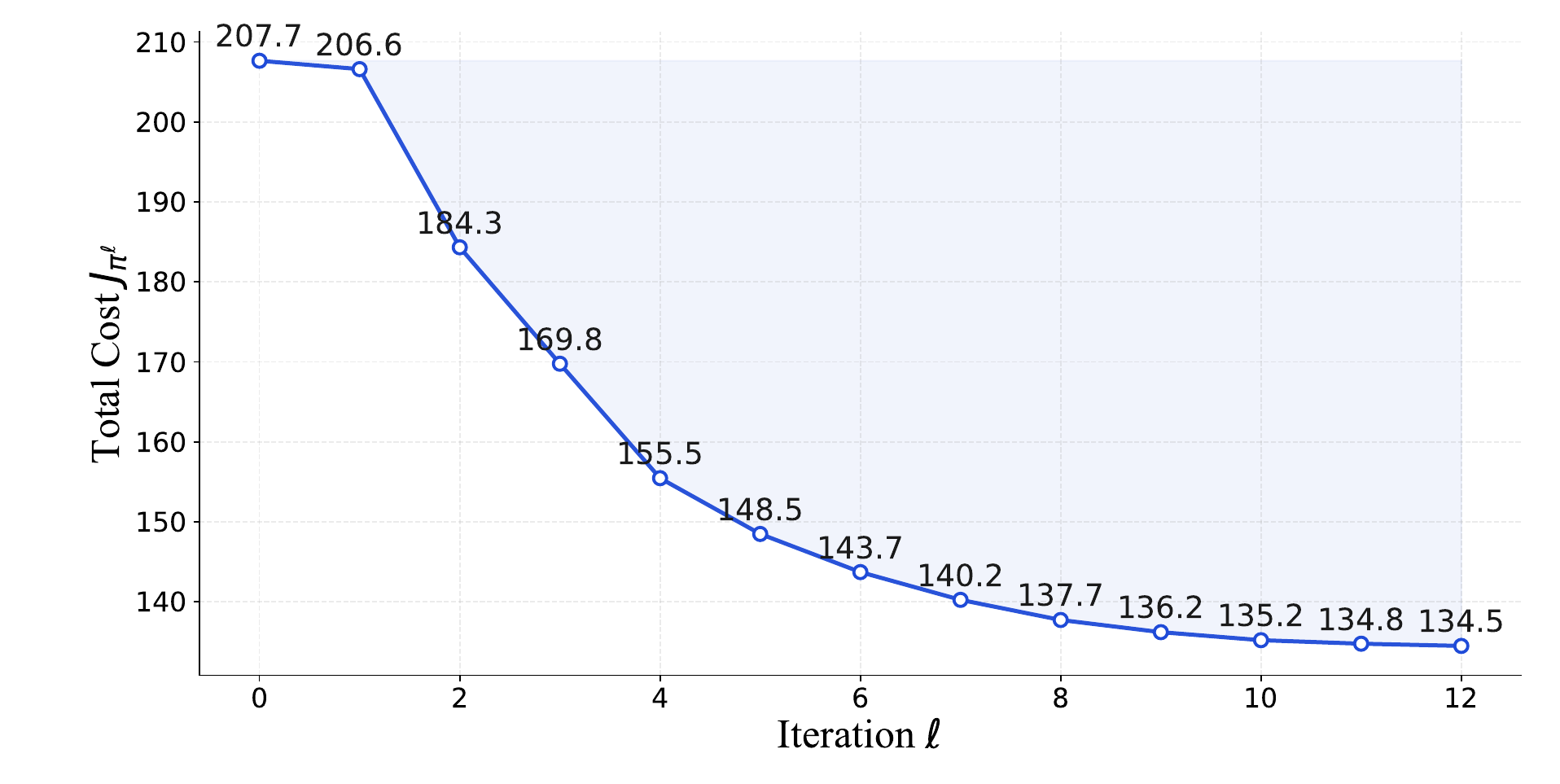}
        \caption{Multiagent on-line PI of scenario 1} 
        \label{fig:env1_cost}
    \end{subfigure}
    \caption{Multiagent on-line PI applied to scenario 1. The scenario is illustrated in figure (a), where the green region is the goal; gray boxes and red spheres are obstacles. The successive cost improvement of multiagent on-line PI is shown in figure (b).}
    \label{fig:env1_results}
\end{figure}

\begin{figure}[ht!]
    \centering
    \begin{subfigure}[t]{0.7\columnwidth}
        \centering
        \includegraphics[width=\textwidth]{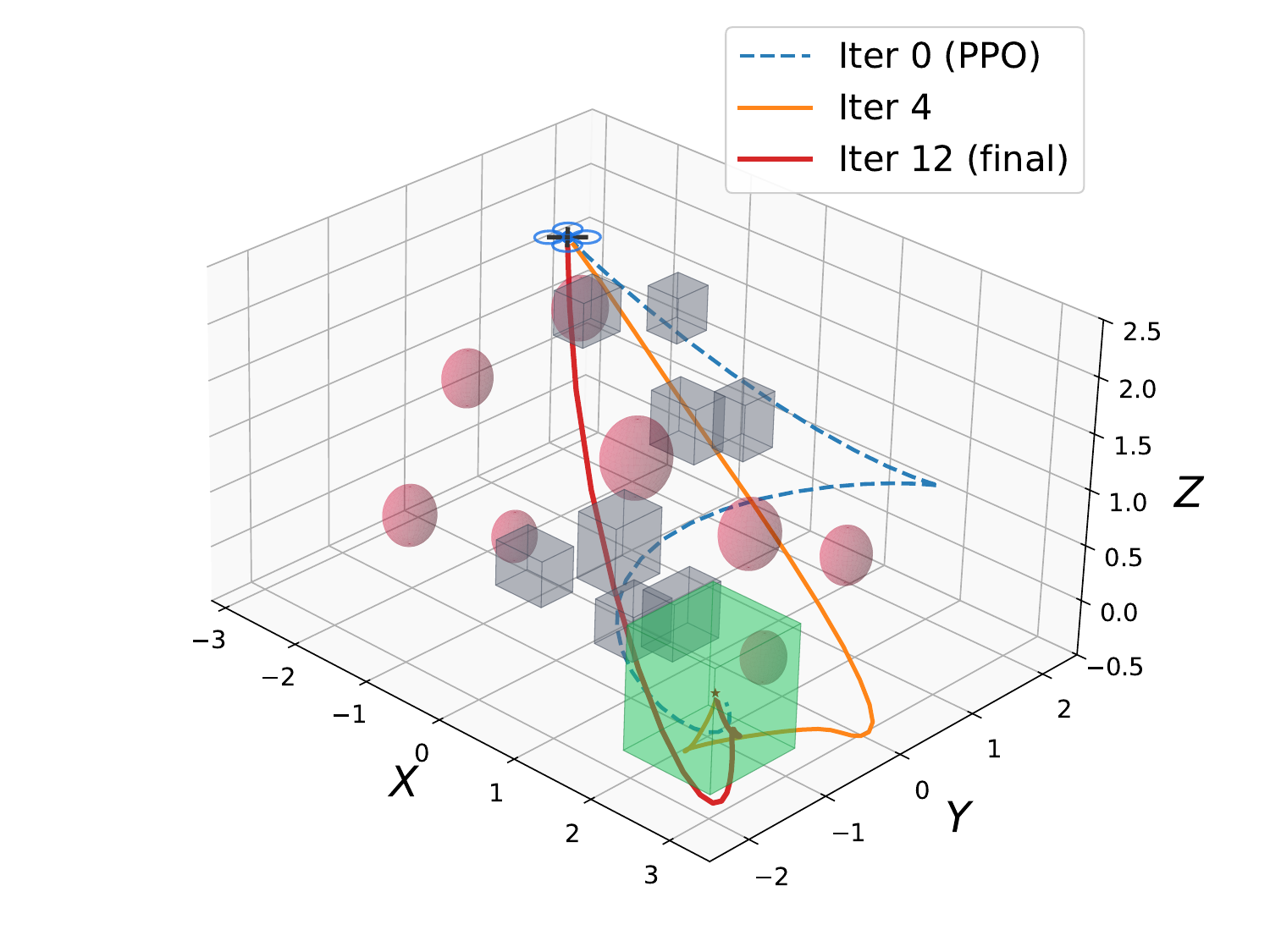}
        \caption{Scenario 2}
        \label{fig:env2_traj}
    \end{subfigure}
    \vspace{0.5cm}
    \begin{subfigure}[t]{0.7\columnwidth}
        \centering
        \includegraphics[width=\textwidth]{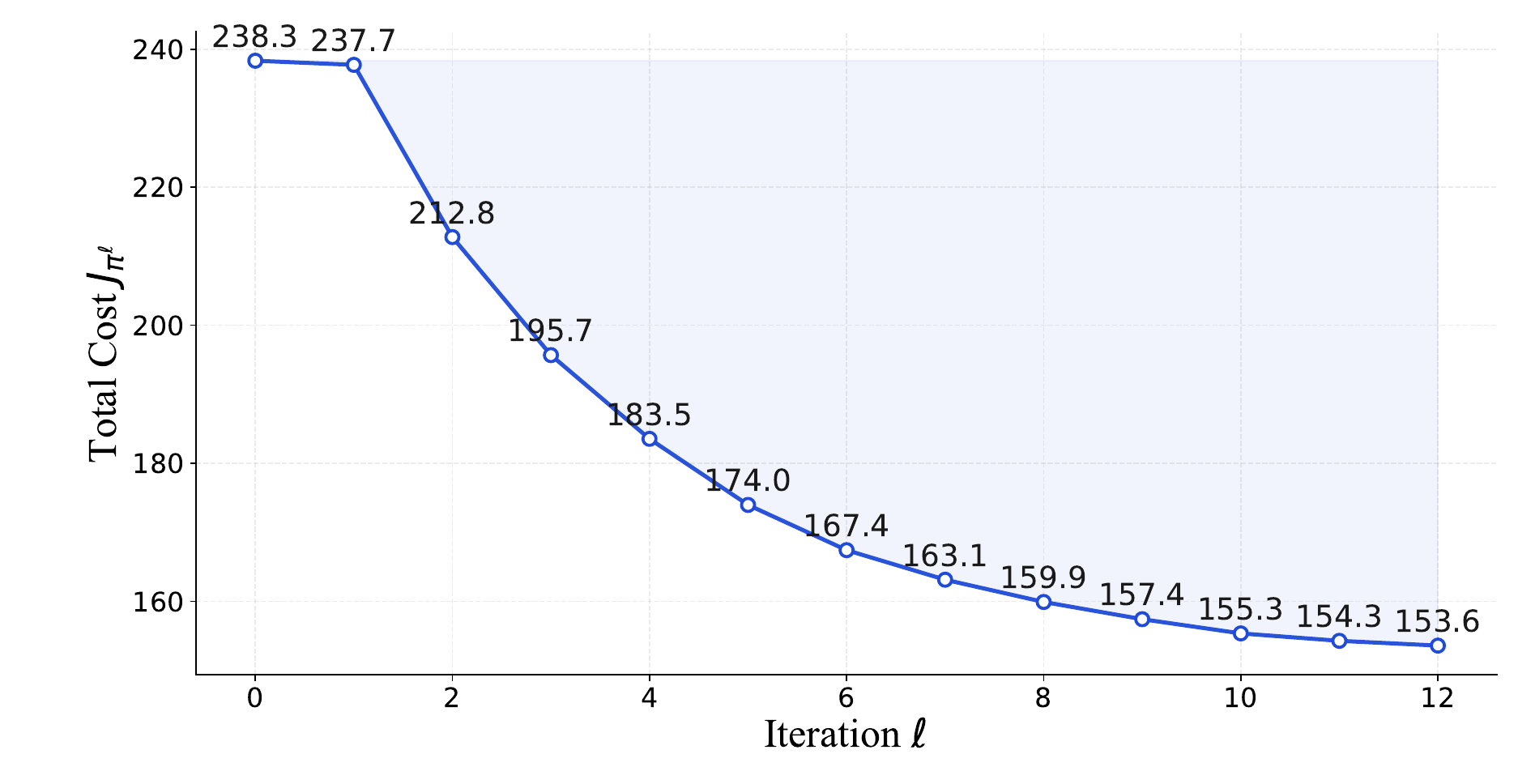}
        \caption{Multiagent on-line PI of scenario 2} 
        \label{fig:env2_cost}
    \end{subfigure}
    \caption{Multiagent on-line PI applied to scenario 2. The scenario is illustrated in figure (a), where the green region is the goal; gray boxes and red spheres are obstacles. The successive cost improvement of multiagent on-line PI is shown in figure (b).}
    \label{fig:env2_results}
\end{figure}

\begin{figure}[ht!]
    \centering
    \begin{subfigure}[t]{0.7\columnwidth}
        \centering
        \includegraphics[width=\textwidth]{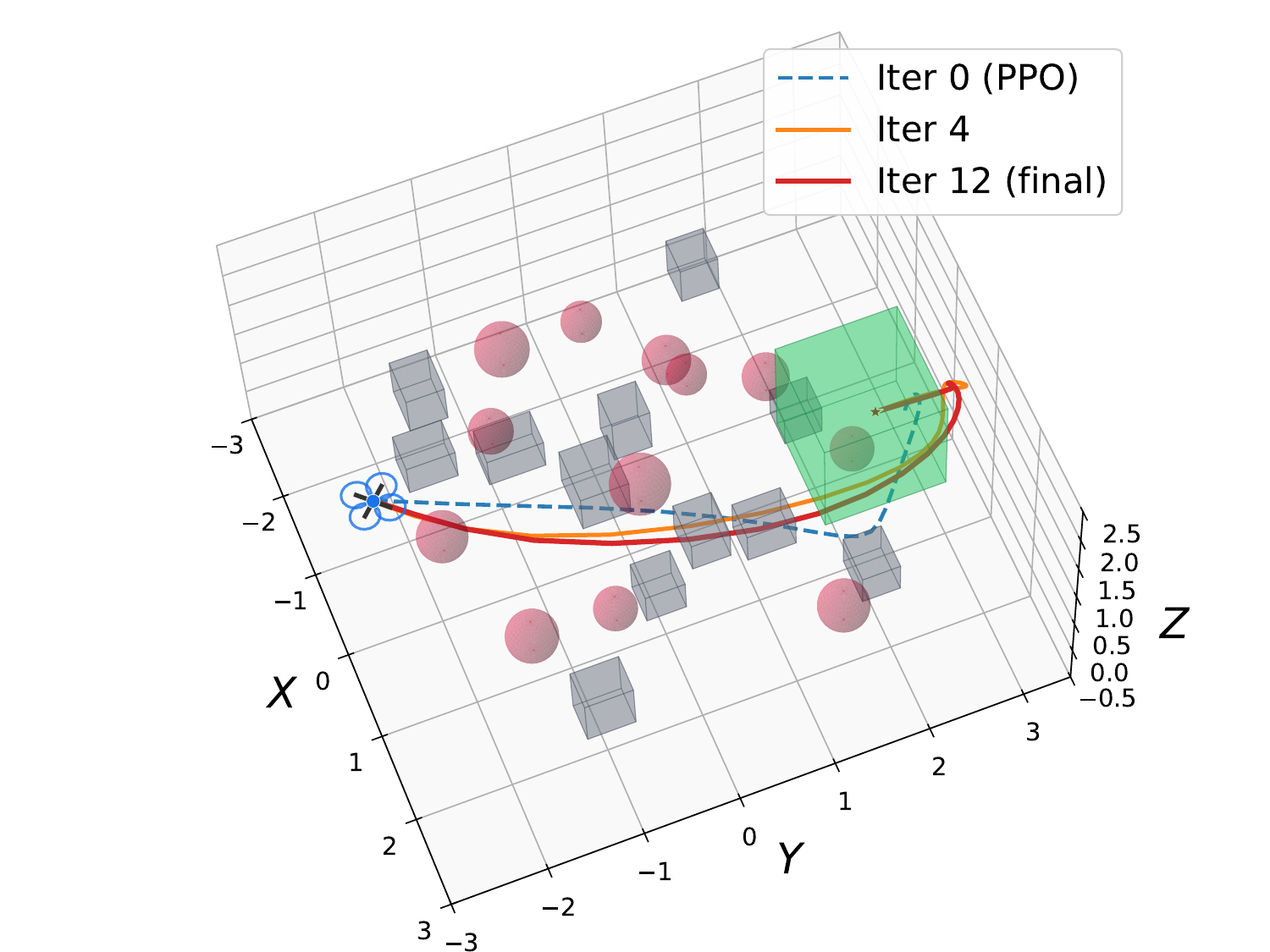}
        \caption{Scenario 3}
        \label{fig:env3_traj}
    \end{subfigure}
    \vspace{0.5cm}
    \begin{subfigure}[t]{0.7\columnwidth}
        \centering
        \includegraphics[width=\textwidth]{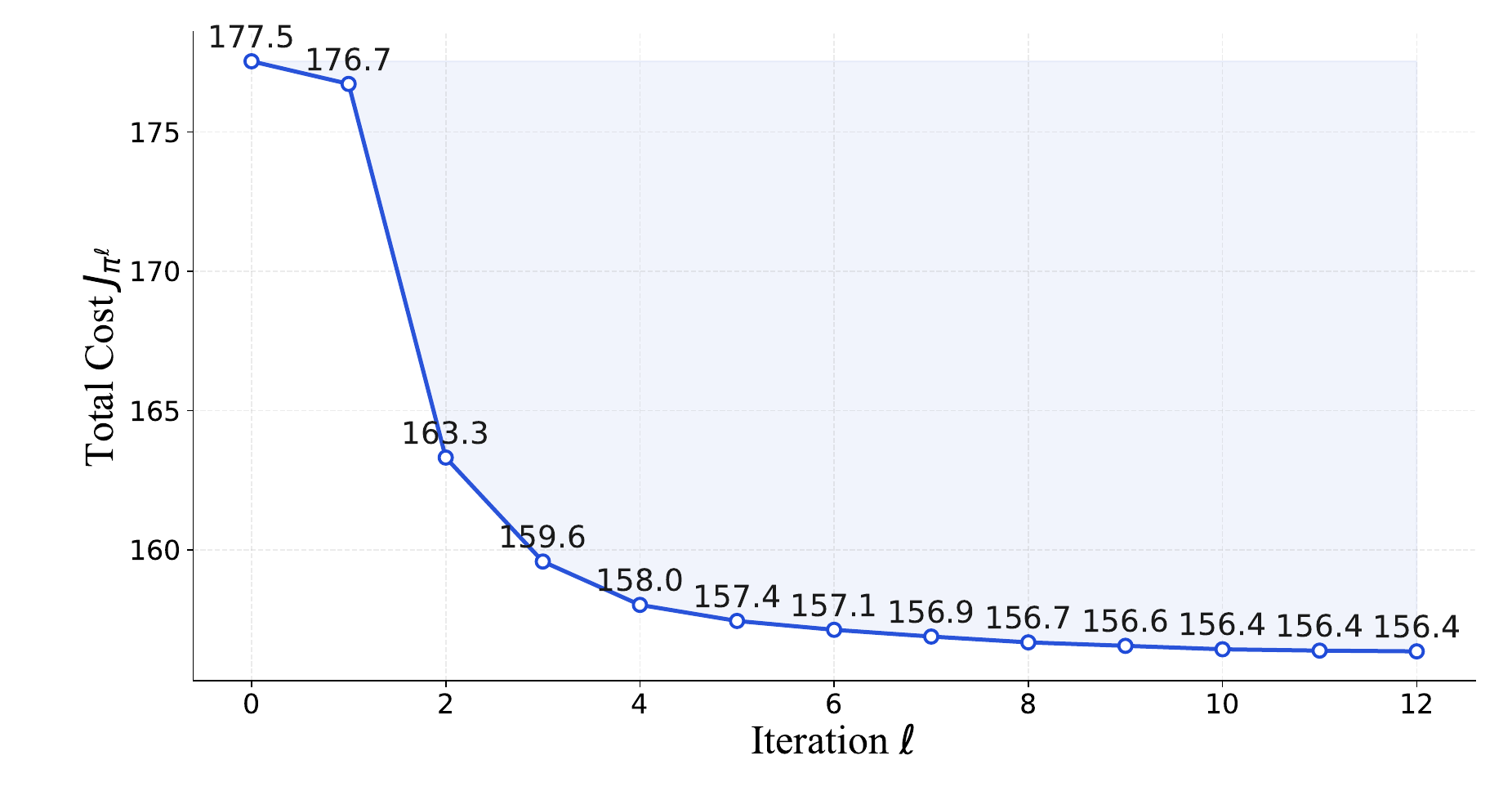}
        \caption{Multiagent on-line PI of scenario 3} 
        \label{fig:env3_cost}
    \end{subfigure}
    \caption{Multiagent on-line PI applied to scenario 3. The scenario is illustrated in figure (a), where the green region is the goal; gray boxes and red spheres are obstacles. The successive cost improvement of multiagent on-line PI is shown in figure (b).}
    \label{fig:env3_results}
\end{figure}

\begin{table}[h]
\centering
\caption{Per-iteration computational metrics without parallel processor. We report per-stage time (mean$\pm$standard deviation, in milliseconds, with header `Stage') and data collection plus training time (in seconds, with header `$\mathcal F_\theta$').}
\label{tab:timing_both_envs}
\begin{tabular}{c|cc|cc|cc}
\hline
\multirow{2}{*}{$\ell$} & \multicolumn{2}{c|}{Scenario 1} & \multicolumn{2}{c|}{Scenario 2} & \multicolumn{2}{c}{Scenario 3} \\
& Stage & $\mathcal F_\theta$ & Stage & $\mathcal F_\theta$ & Stage & $\mathcal F_\theta$ \\
\hline
1 & 18.2$\pm$9.9 & 5.1 & 24.6$\pm$13.3 & 6.8 & 34.0$\pm$18.5 & 9.0 \\
3 & 36.6$\pm$20.5 & 5.3 & 42.0$\pm$23.7 & 6.6 & 51.8$\pm$29.0 & 8.9\\
5 & 35.5$\pm$19.8 & 5.4 & 42.3$\pm$23.9 & 6.9 & 53.0$\pm$29.5 & 9.2\\
7 & 35.7$\pm$20.0 & 5.2 & 41.7$\pm$23.3 & 7.0 & 52.3$\pm$29.2 & 9.2\\
9 & 36.4$\pm$20.5 & 5.3 & 41.8$\pm$23.8 & 6.6 & 52.6$\pm$28.5 & 9.2\\
12 & 35.6$\pm$20.2 & 5.5 & 42.4$\pm$23.7 & 7.2 & 52.3$\pm$29.3 & 9.4\\
\hline
\end{tabular}
\end{table}

\begin{table}[h]
\centering
\caption{Per-iteration computational metrics with parallel processors. We report per-stage time (mean$\pm$standard deviation, in milliseconds, with header `Stage') and data collection plus training time (in seconds, with header `$\mathcal F_\theta$').}
\label{tab:timing_parallel_both_envs}
\begin{tabular}{c|cc|cc|cc}
\hline
\multirow{2}{*}{$\ell$} & \multicolumn{2}{c|}{Scenario 1} & \multicolumn{2}{c|}{Scenario 2} & \multicolumn{2}{c}{Scenario 3} \\
& Stage & $\mathcal F_\theta$ & Stage & $\mathcal F_\theta$ & Stage & $\mathcal F_\theta$ \\
\hline
1 & 5.0$\pm$2.2 & 5.1 & 6.2$\pm$2.9 & 6.9 & 9.8$\pm$4.6 & 9.1 \\
3 & 8.8$\pm$5.2 & 5.7 & 9.7$\pm$5.1 & 6.7 & 11.9$\pm$6.7 & 9.0\\
5 & 8.6$\pm$4.9 & 5.5 & 10.3$\pm$6.9 & 6.9 & 11.6$\pm$6.3 & 9.1\\
7 & 9.3$\pm$6.7 & 5.3 & 10.3$\pm$6.9 & 7.0 & 11.5$\pm$6.2 & 9.1\\
9 & 8.5$\pm$4.3 & 5.2 & 9.8$\pm$5.1 & 6.6 & 11.6$\pm$6.1 & 8.9\\
12 & 8.9$\pm$5.7 & 5.5 & 9.9$\pm$5.4 & 7.1 & 11.3$\pm$5.8 & 9.2\\
\hline
\end{tabular}
\end{table}

\subsection{Path Planning and Coordination of Multiple Drones}\label{sec:4_3}
We consider an extension of the problem described in Section~\ref{sec:4_2}, where we aim to compute paths for $m$ drones to reach their respective goal regions while avoiding collisions among themselves. In particular, each drone is characterized by its position and acceleration in $3$ dimensions as described in Section~\ref{sec:4_2}. Therefore, the state $x_k$ in the present problem is a $6m$-dimensional vector. Similarly, the control $u_k$ is a $3m$-dimensional vector, representing the acceleration of all drones in three directions. The state equation $f_k$ and control constraint set $U_k(x_k)$ can be defined accordingly. The stage cost $g_k(x_k,u_k)$ is the sum of costs contributed by the individual drones, as defined in Section~\ref{sec:4_2}, plus additional costs that penalize short pairwise distances between drones. We assume that all drones begin operation at the same time. This assumption is made without loss of generality: if some drones are scheduled to start moving later than others, they can be modeled as staying still at their initial position until their respective start times.

We have applied the stochastic version of multiagent on-line PI to solve this problem, where there are $3m$ `agents' in our terminology. The way in which the stochastic generator $\tilde{\mathcal{G}}$ is defined is identical to that described in Section~\ref{sec:4_2}, except that suitable modifications were made to accommodate the larger problem dimension. Although we have modeled the problem by using a single state $x_k$ and a single control $u_k$, we note that our algorithm can be carried out in a distributed fashion. In particular, at state $x_k^{\ell+1}$, the first drone can compute the control components $(u_{k,1},u_{k,2},u_{k,3})$ related to itself, provided that it has access to the state information $x_k^{\ell+1}$ and the policy $\pi^\ell$. Upon computing the control components $(u_{k,1}^{\ell+1},u_{k,2}^{\ell+1},u_{k,3}^{\ell+1})$, the first drone can share these components with all the other drones, so that the second drone can compute its own control components; see Fig.~\ref{fig:multi_drone}. As indicated in Section~\ref{sec:4_2}, each drone can use parallel computation when calculating individual control components. However, the drones cannot compute the control components associated with themselves in parallel. Instead, a drone needs to wait for the computation results of the preceding drones, according to some predefined order.

\begin{figure}[t]
    \centering
    \includegraphics[width=0.9\linewidth]{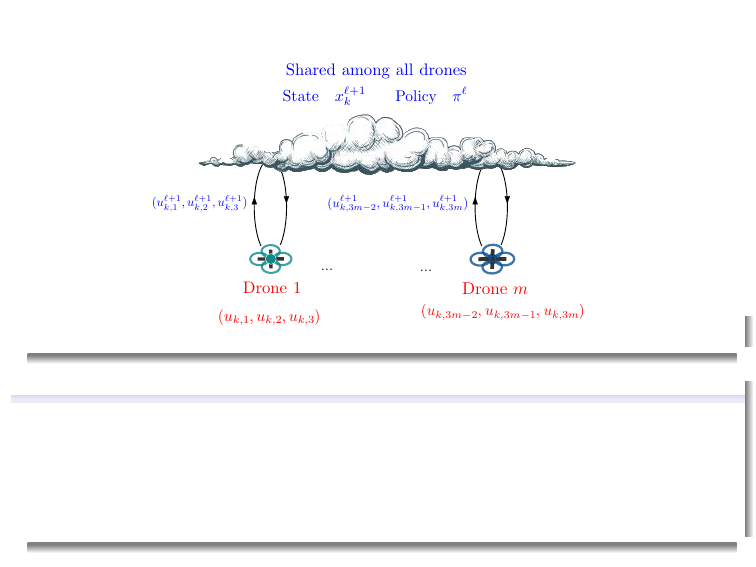}
    \caption{Illustration of multiagent on-line PI via distributed implementation. During iteration $(\ell+1)$, at state $x_k^{\ell+1}$, all $m$ drones have access to the state $x_k^{\ell+1}$ (including positions and velocities of other drones), and to the policy $\pi^\ell$ computed at the $\ell$th iteration (so they can simulate trajectories of all drones), stored in the cloud. The first drone computes control components $(u_{k,1},u_{k,2},u_{k,3})$ related to itself, one component at a time. The computed control components $(u_{k,1}^{\ell+1},u_{k,2}^{\ell+1},u_{k,3}^{\ell+1})$ are uploaded to the cloud, and the second drone can compute the control components $(u_{k,4},u_{k,5},u_{k,6})$, after receiving the control components computed by the first drone. This process is continued until the $m$th drone computes its control.}
    \label{fig:multi_drone}
\end{figure}

We have considered the same obstacle layouts as for the scenarios 1, 2, and 3, introduced in Section~\ref{sec:4_2}, and we have computed corresponding paths for 2, 3, and 4 drones, respectively. The obstacle layout, the position of the goal regions of all the drones, the trajectories generated under the initial policy, and the policies obtained during the $4$th and the final iterations are shown in Figs.~\ref{fig:env1_mas_traj}, \ref{fig:env2_mas_traj}, and \ref{fig:env3_mas_traj}. It can be seen from Figs.~\ref{fig:env1_mas_cost}, \ref{fig:env2_mas_cost} and \ref{fig:env3_mas_cost} that multiagent on-line PI generates significantly improved sequences of policies in  all the test cases. The corresponding computation times are given in Table~\ref{tab:timing_mas_both_envs}, where parallel computation is used for calculating individual control components. The computation time per stage (mean$\pm$standard deviation, in milliseconds) is below the `standard' $0.25$ second limit for all problems, even for the case of $4$ drones. The time for data generation and  training of each new policy is also shown in Table~\ref{tab:timing_mas_both_envs} (in seconds).

\begin{figure}[ht!]
    \centering
    \begin{subfigure}[t]{\columnwidth}
        \centering
        \includegraphics[width=0.8\textwidth]{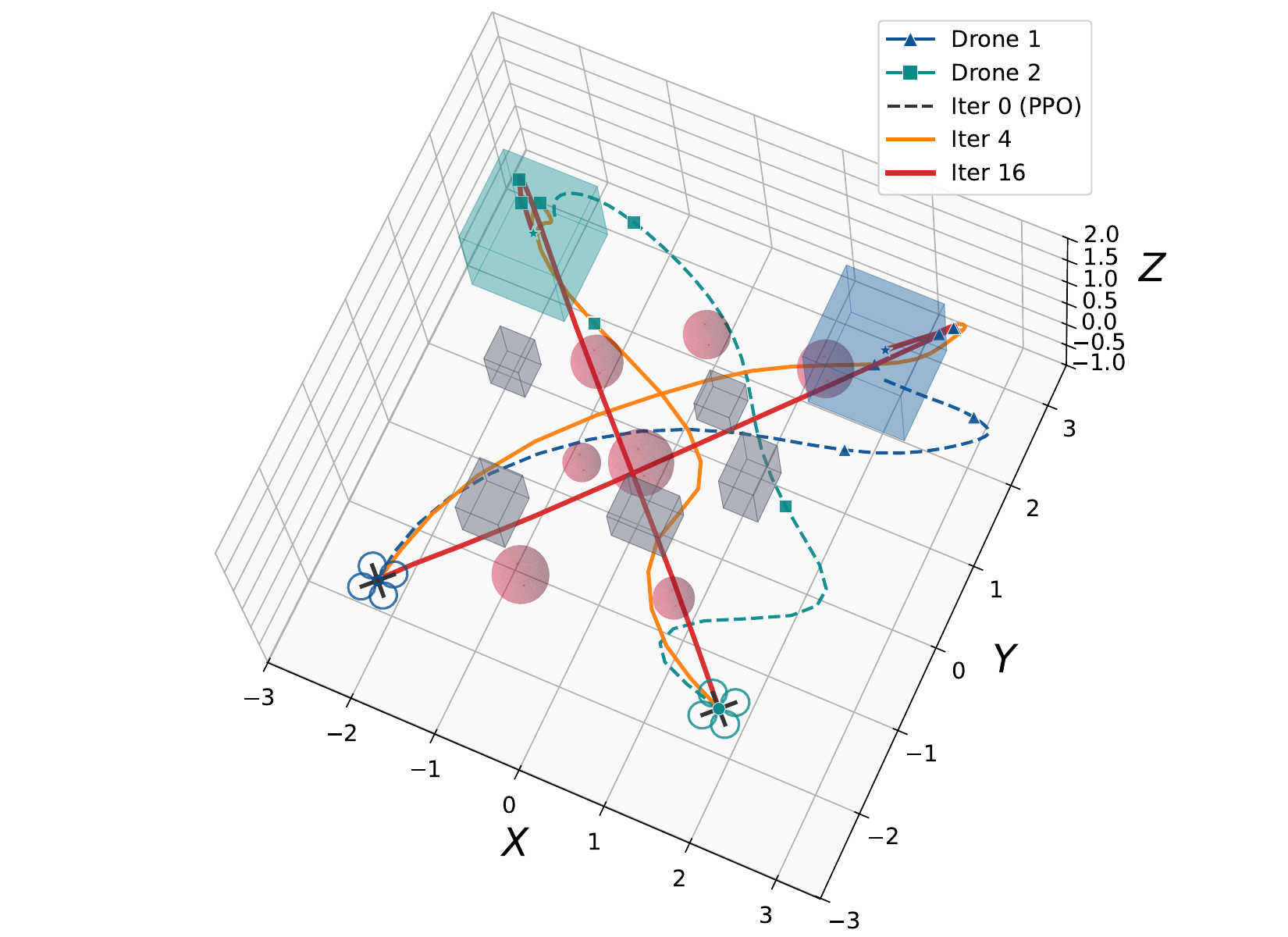}
        \caption{Scenario 1 with 2 drones}
        \label{fig:env1_mas_traj}
    \end{subfigure}
    \vspace{0.5cm}
    \begin{subfigure}[t]{\columnwidth}
        \centering
        \includegraphics[width=0.8\textwidth]{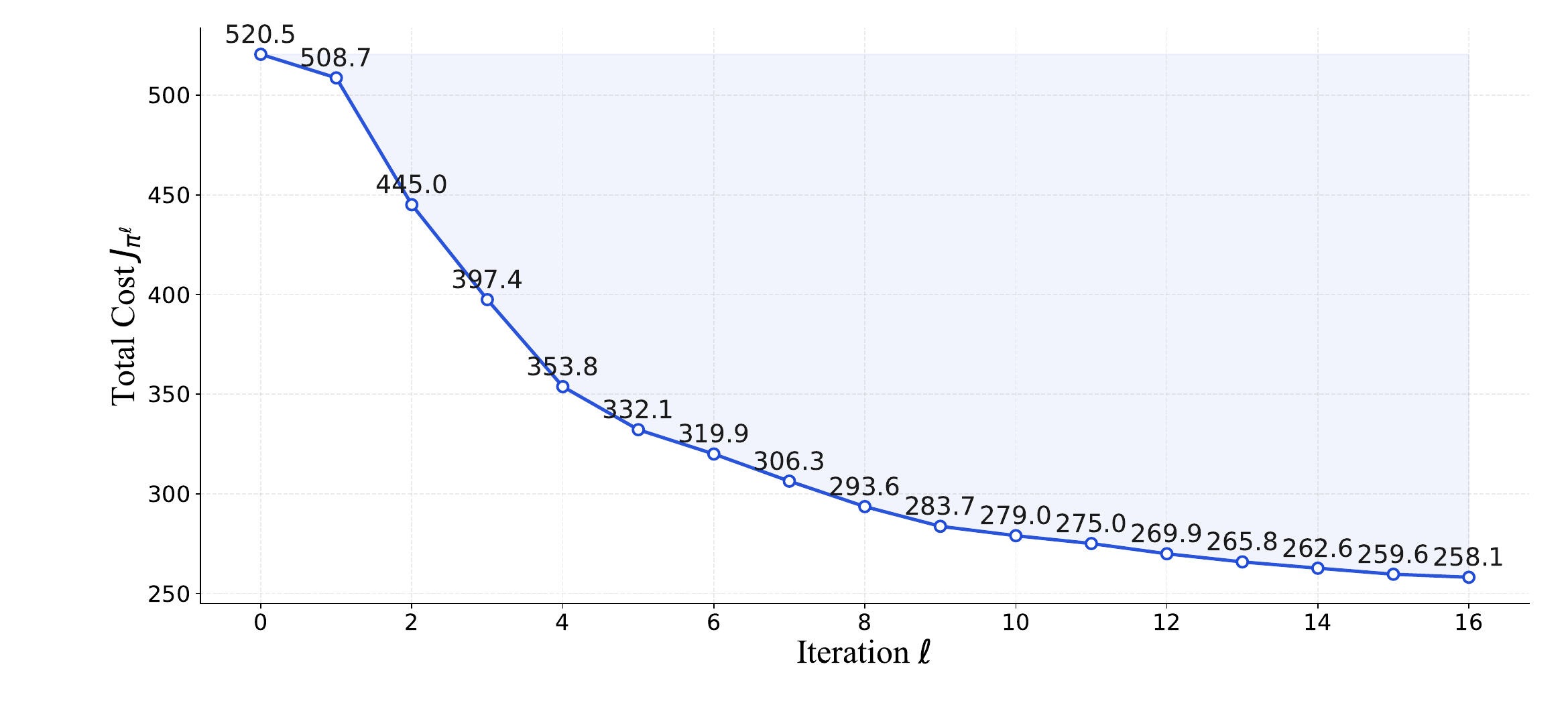}
        \caption{Multiagent on-line PI of scenario 1} 
        \label{fig:env1_mas_cost}
    \end{subfigure}
    \caption{Multiagent on-line PI applied to scenario 1. The scenario is illustrated in figure (a), where the green and blue regions are the goals for the two drones; gray boxes and red spheres are obstacles. The successive cost improvement of multiagent on-line PI is shown in figure (b).}
    \label{fig:env1_mas_results}
\end{figure}

\begin{figure}[ht!]
    \centering
    \begin{subfigure}[t]{\columnwidth}
        \centering
        \includegraphics[width=0.8\textwidth]{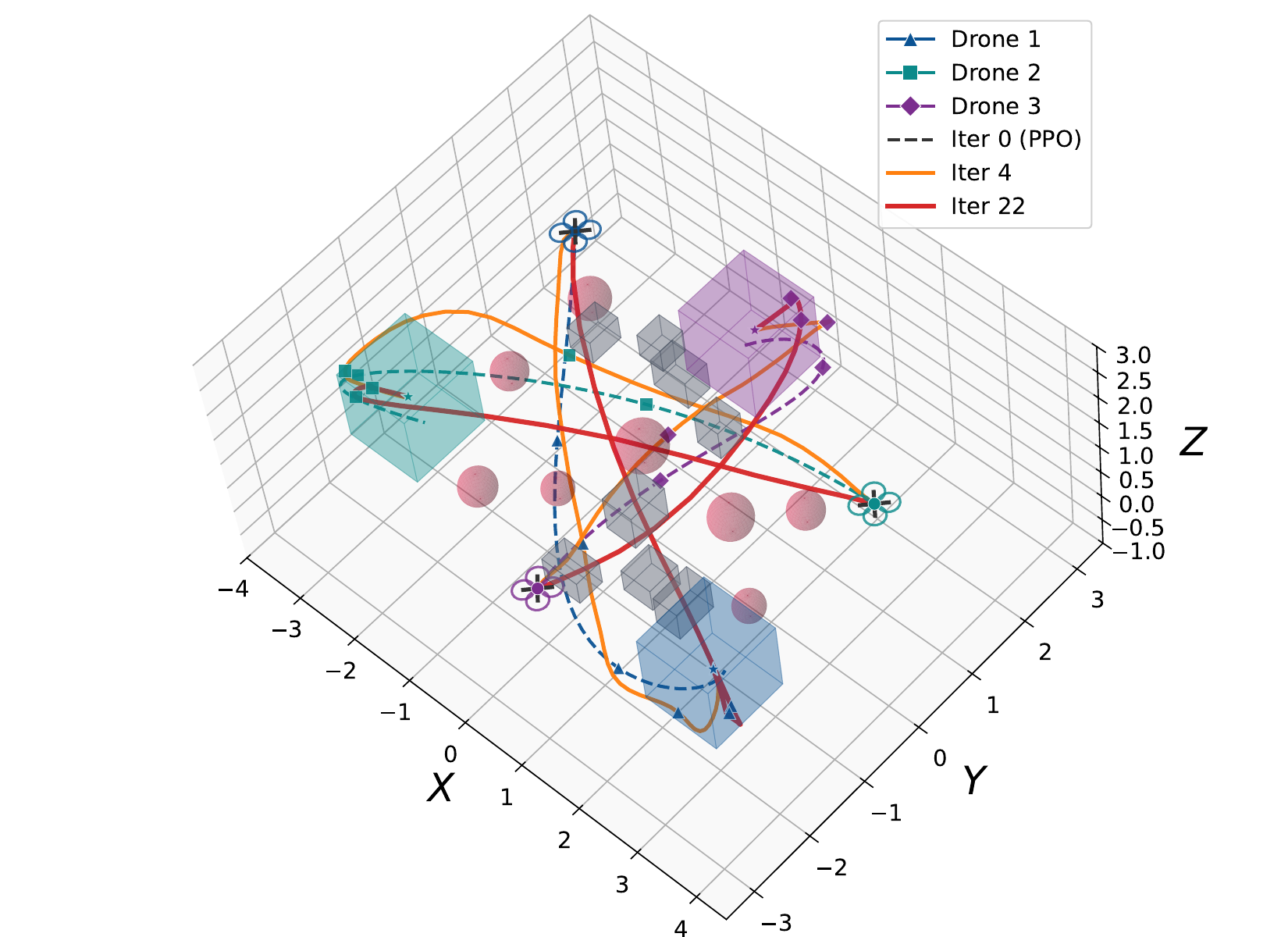}
        \caption{Scenario 2 with 3 drones}
        \label{fig:env2_mas_traj}
    \end{subfigure}
    \vspace{0.5cm}
    \begin{subfigure}[t]{\columnwidth}
        \centering
        \includegraphics[width=0.8\textwidth]{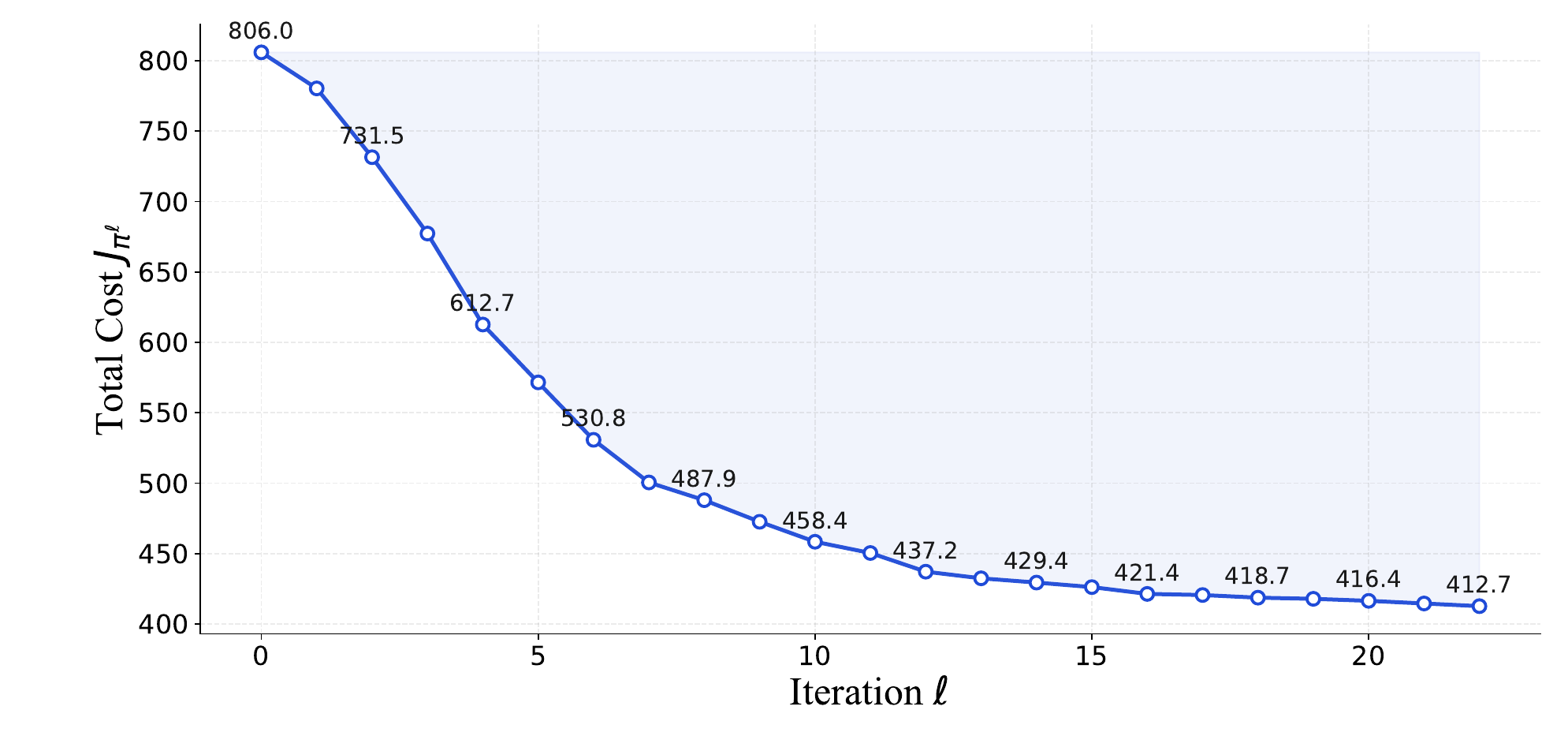}
        \caption{Multiagent on-line PI of scenario 2} 
        \label{fig:env2_mas_cost}
    \end{subfigure}
    \caption{Multiagent on-line PI applied to scenario 2. The scenario is illustrated in figure (a), where the green, blue, and purple regions are the goals for the three drones; gray boxes and red spheres are obstacles. The successive cost improvement of multiagent on-line PI is shown in figure (b).}
    \label{fig:env2_mas_results}
\end{figure}

\begin{figure}[ht!]
    \centering
    \begin{subfigure}[t]{\columnwidth}
        \centering
        \includegraphics[width=0.8\textwidth]{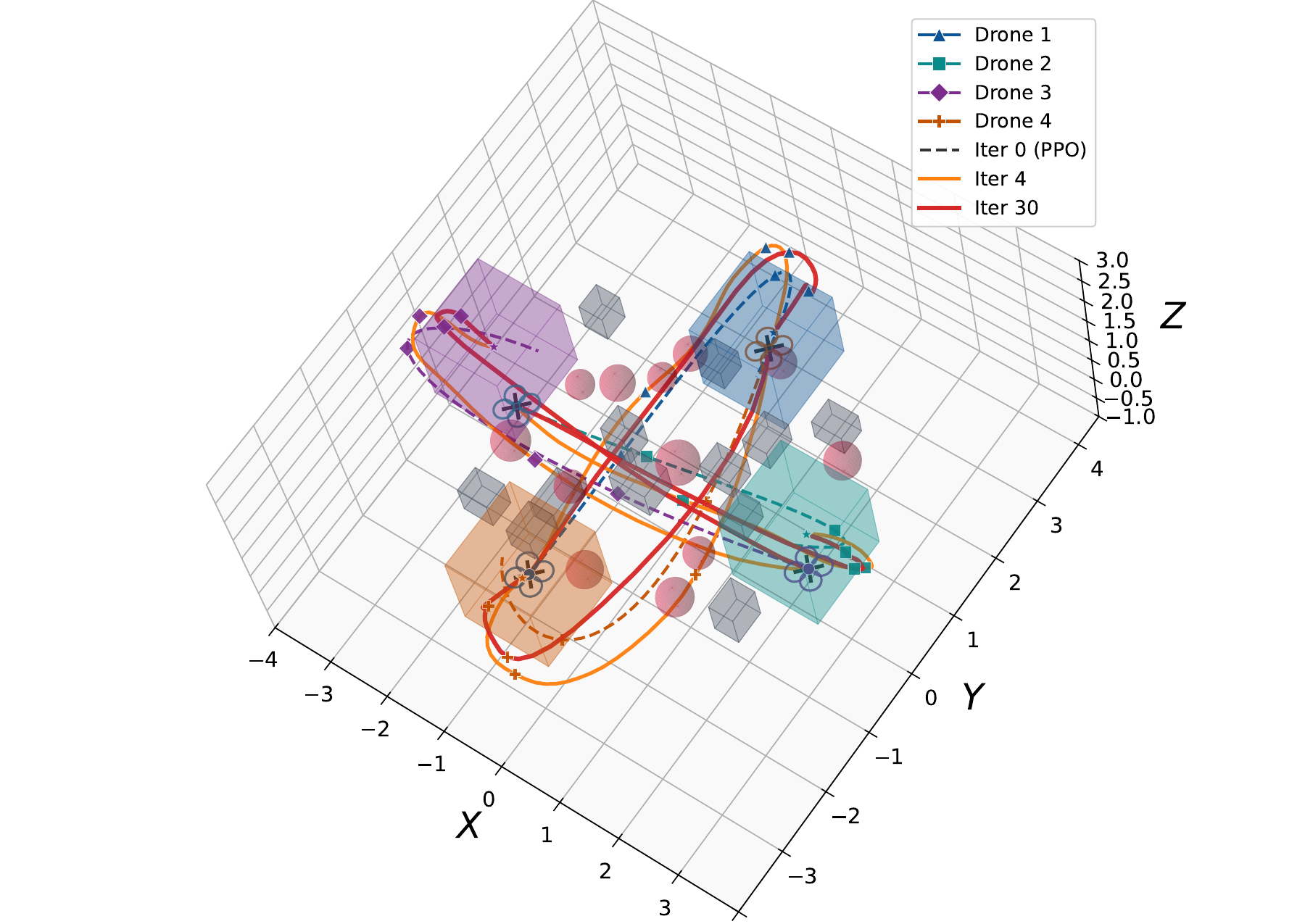}
        \caption{Scenario 3 with 4 drones}
        \label{fig:env3_mas_traj}
    \end{subfigure}
    \vspace{0.5cm}
    \begin{subfigure}[t]{\columnwidth}
        \centering
        \includegraphics[width=0.8\textwidth]{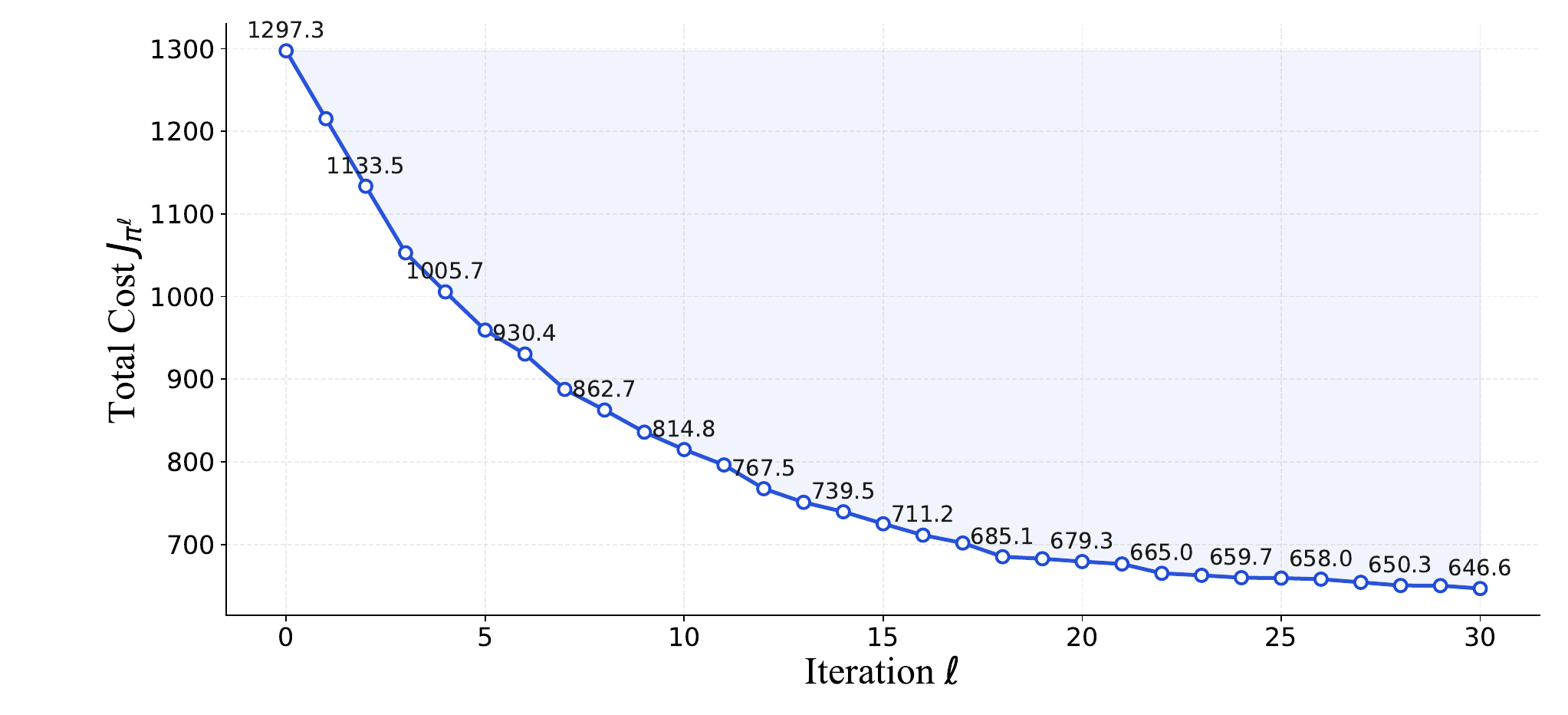}
        \caption{Multiagent on-line PI of scenario 3} 
        \label{fig:env3_mas_cost}
    \end{subfigure}
    \caption{Multiagent on-line PI applied to scenario 3. The scenario is illustrated in figure (a), where the green, blue, purple  and orange regions are the goals for the four drones; gray boxes and red spheres are obstacles. The successive cost improvement of multiagent on-line PI is shown in figure (b).}
    \label{fig:env3_mas_results}
\end{figure}

\begin{table}[ht]
\centering
\caption{Per-iteration computational metrics with parallel processors. We report per-stage time (mean$\pm$standard deviation, in milliseconds, with header `Stage') and data collection plus training time (in seconds, with header `$\mathcal F_\theta$').}
\label{tab:timing_mas_both_envs}
\begin{tabular}{c|cc|cc|cc}
\hline
\multirow{2}{*}{$\ell$} & \multicolumn{2}{c|}{2 Drones} & \multicolumn{2}{c|}{3 Drones} & \multicolumn{2}{c}{4 Drones} \\
& Stage & $\mathcal F_\theta$ & Stage & $\mathcal F_\theta$ & Stage & $\mathcal F_\theta$ \\
\hline
1 & 15.1±7.1 & 16.6 & 40.6$\pm$20.4 & 37.9 & 92.8$\pm$47.5 & 80.0\\
4 & 24.2±12.3 & 16.8 & 53.3$\pm$27.8 & 39.5 & 109.0$\pm$57.4 & 80.1\\
7 & 24.1±12.3 & 16.6 & 53.1$\pm$27.7 & 37.5 & 109.2$\pm$57.4 & 82.1\\
10 & 24.1±12.3 & 16.6 & 52.9$\pm$27.4 & 38.7 & 110.3$\pm$57.7 & 74.8 \\
13 & 24.8±12.5 & 17.1 & 53.3$\pm$27.7 & 39.0 & 109.5$\pm$57.5 & 81.0\\
16 & 24.7±12.8 & 17.0 & 53.2$\pm$27.7 & 39.3 & 111.6$\pm$59.7 & 78.4\\
19 & - & - & 53.2$\pm$28.2 & 39.4 & 109.3$\pm$57.4 & 83.1\\
22 & - & - & 53.7$\pm$27.6 & 37.2 & 109.9$\pm$57.7 & 76.8\\
25 & - & - & - & - & 109.9$\pm$57.7 & 80.9\\
28 & - & - & - & - & 109.2$\pm$57.4 & 79.1\\
30 & - & - & - & - & 109.7$\pm$57.5 & 79.6\\
\hline
\end{tabular}
\end{table}

\subsection{Path Planning for a Robotic Arm}\label{sec:4_4}
{Let us consider a path planning problem for a lightweight robotic arm, known as the Franka Emika Robot. It has $7$ degrees of freedom for moving the end of the arm (known as the \emph{end effector} and abbreviated as EE). The state $x_k$ is a $14$-dimensional vector, representing the positions and velocities of $7$ joints. The control $u_k$ is composed of the accelerations at each joint. For the state equation $f_k$, we use the simulator provided by Robotics Toolbox for Python \cite{rtb}. The analytical form of $f_k$ is not necessary for our purpose, as we rely on simulated trajectories in our algorithm. The stage costs $g_k$ are defined by using $28$ interpolation points between the joints and the EE point. These points, together with the EE point, are used to specify a collision penalty and the distance from the EE to the goal region. Further details are given in the supplementary code.    

Similar to the preceding two sections, we have applied the stochastic version of multiagent on-line PI to solve the problem. In particular, we have used the same approach to define the stochastic generator $\tilde{\mathcal{G}}$, except that the size of the data set and the form of the neural network are modified to accommodate changes in state and control dimensions. The trajectories of the EE under policies $\pi^0$, $\pi^4$, and $\pi^8$, starting with an initial policy obtained via PPO,  are shown in Fig.~\ref{fig:arm_traj}, where the robotic arm is illustrated by the gray cylinders, the end effector is shown as the red cylinder, the obstacles are shown as red spheres, and the goal region is the green rectangle. It can be seen that EE under $\pi^0$ does not reach the goal region at stage $N$, while it reaches the goal region under both $\pi^4$ and $\pi^8$. The successive cost improvement of multiagent on-line PI is shown in Fig.~\ref{fig:arm_cost}. For implementation convenience, only a serial processor is used in our computational studies here. The computation time per stage is around $55\pm31$ (mean$\pm$standard deviation, in milliseconds), which is close to satisfying the $50$ millisecond limit. Had parallel processors been used, the computation time would be well below the limit, as indicated by the computational studies in Section~\ref{sec:4_2}.}

\begin{figure}[ht!]
    \centering
    \begin{subfigure}[t]{\columnwidth}
        \centering
        \includegraphics[width=0.7\textwidth]{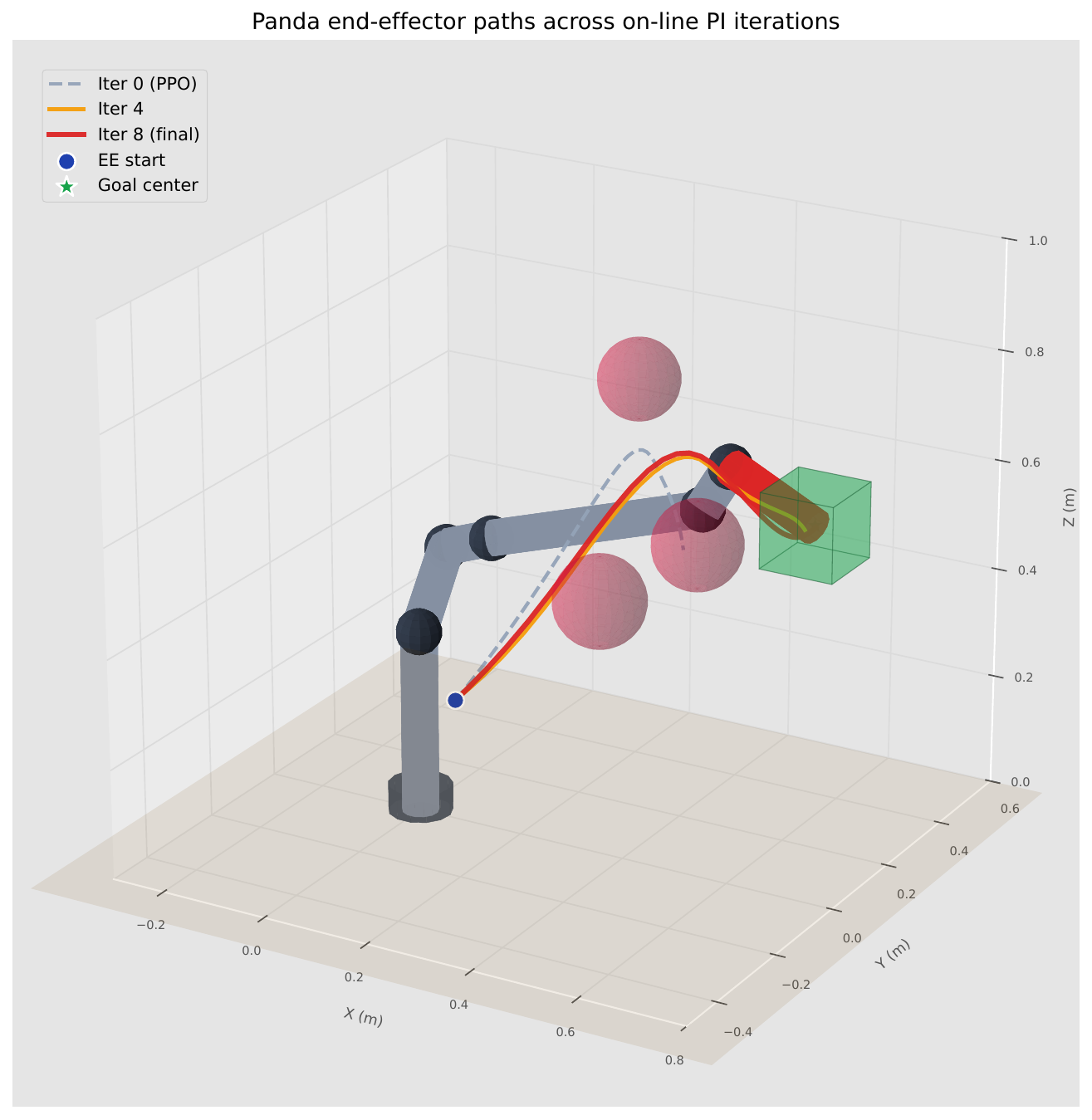}
        \caption{Path planning for robotic arm}
        \label{fig:arm_traj}
    \end{subfigure}
    \vspace{0.5cm}
    \begin{subfigure}[t]{\columnwidth}
        \centering
        \includegraphics[width=0.7\textwidth]{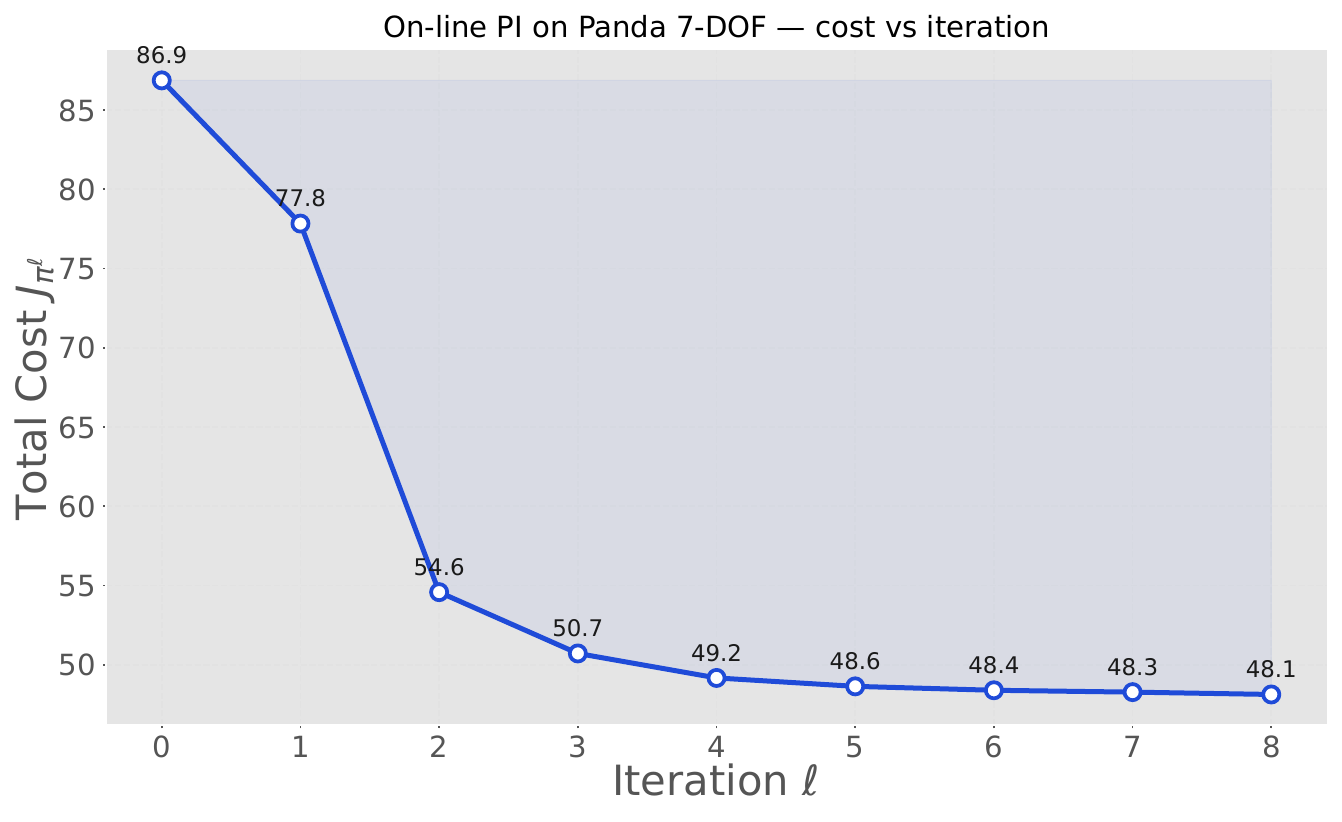}
        \caption{Multiagent on-line PI cost over iterations} 
        \label{fig:arm_cost}
    \end{subfigure}
    \caption{{Multiagent on-line PI applied to path planning for a robotic arm. The scenario is illustrated in figure (a), where the green region is the goal, and red spheres are obstacles. The robotic arm is illustrated as gray cylinders and the end effector is shown as a red cylinder. The successive cost improvement of multiagent on-line PI is shown in figure (b).}}
    \label{fig:arm_results}
\end{figure}

\subsection{Comparison with Learning Model Predictive Control}
{Let us now compare our on-line PI method and the learning model predictive control (LMPC) method introduced in \cite{rosolia2017learning,rosolia2019learning,rosolia2022optimality}. The methods bear some similarity, since both apply to repetitive tasks with a fixed initial state, and aim at iterative policy improvement.

An important conceptual difference of on-line PI from LMPC is that it transfers improvement from one iteration to the next by trajectory-driven policy generation. In particular, we use rollout to compute an improved trajectory, generate a local state-control data set around that trajectory, and train the next policy from this data. At the same time, our consistency condition guarantees that the learned policy inherits the trajectory improvement at the fixed initial state. By contrast, LMPC uses stored successful trajectories to build a terminal safe set and terminal cost approximation inside a receding-horizon optimization scheme. 

We will first provide a computational comparison of on-line PI and  LMPC using} {the path-planning problems of Sections~\ref{sec:4_2}-\ref{sec:4_4}. For the single- and multi-drone path-planning problems and the robotic arm planning problem studied in Sections~\ref{sec:4_2}-\ref{sec:4_4}, on-line PI appears to be better suited. In particular, we have considered two types of initial policies. The first consists of the `good' policies used in Sections~\ref{sec:4_2}-\ref{sec:4_4}, obtained via PPO. The second consists of `poor' policies, also obtained via PPO but with substantially fewer training iterations. In all these examples, on-line PI yields larger cost reduction than LMPC over the same number of iterations, and the advantage becomes more pronounced when the initial policy is poor. As an illustration, consider path planning for 3 drones in scenario 2. Starting from a policy $\pi^0$ with cost $18493.9$, the cost of $\pi^2$ obtained via on-line PI is $914.8$, whereas the cost of the policy obtained via LMPC after 2 iterations is $6835.5$. The corresponding trajectories are shown in Fig.~\ref{fig:3_drone_compare}. Similar behavior was observed in all test cases for single- and multi-drone path planning and robotic arm planning.

We have also tested on-line PI and  LMPC on the constrained linear quadratic problem and the minimum-time problem of \cite[Secs.~IV-A-B]{rosolia2017learning}. When both methods are initialized from the same feasible trajectory, LMPC achieves larger cost reduction over the same number of iterations. This is consistent with the fact that LMPC is particularly effective for problems that are well suited for the construction of a safe set constraint and for an efficient multistep MPC optimization. Thus it appears that because the two methods aim to improve performance through different mechanisms, they may be best suited for different problem types.

\begin{figure}[t]
    \centering
    \includegraphics[width=0.8\linewidth]{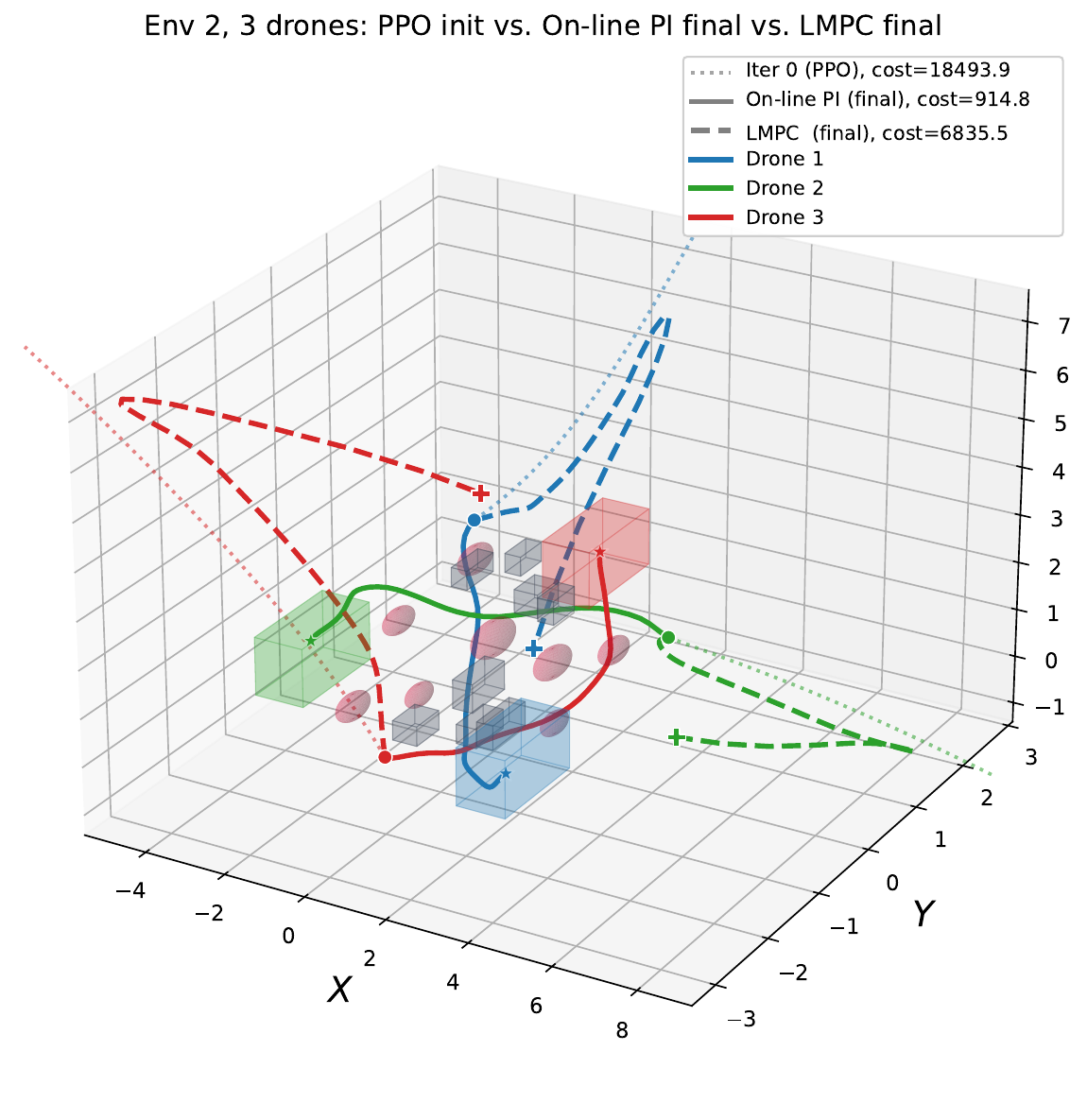}
    \caption{{Illustration of trajectories of $3$ drones under the `poor' initial policy $\pi^0$, the policy $\pi^2$ computed via on-line PI, and the policy computed via LMPC after two iterations. It can be seen that on-line PI is more robust to the poor initial policy and has a larger cost improvement than LMPC.}}
    \label{fig:3_drone_compare}
\end{figure}
}

\section{Concluding Remarks}
We have considered finite-horizon deterministic optimal control problems with a fixed initial state and proposed an on-line PI method based on trajectory-driven policy generation. The method produces a sequence of policies and corresponding trajectories. Each trajectory is used in turn to generate data for training a new policy. Under a natural consistency condition on the policy generator, we showed that the policies obtained have monotonically decreasing cost at the initial state. 
We also introduced a simplified variant that reduces real-time computation while preserving the cost improvement property, and we provided stochastic extensions of our algorithms.  In our robotic path planning experiments, we used the PPO algorithm to generate the initial base policy, but any other suitable algorithm can be used for this purpose. 

Our framework combines rollout and policy iteration ideas with flexible trajectory-based policy representations and supports neural-network-based policies. Computational results demonstrate its effectiveness in both discrete and continuous problems.  An interesting subject for further research and numerical experimentation is the robustness of our framework in the absence of an exact system model.

\bibliographystyle{alpha}
\bibliography{ref} 
\end{document}